\documentclass[conference,10pt]{IEEEtran}
\usepackage{cite}

\ifCLASSINFOpdf
   \usepackage[pdftex]{graphicx}

\else

\fi

\bstctlcite{IEEEexample:BSTcontrol}

\usepackage{cite}

\usepackage{amsmath,amssymb,amsthm}

\interdisplaylinepenalty=2500
\usepackage{mathtools}

\usepackage{dblfloatfix}
\usepackage[caption=false,font=footnotesize]{subfig}
\usepackage{mathrsfs}

\bibliographystyle{IEEEtran}

\newtheorem{theorem}{Theorem}

\newtheorem{lemma}[theorem]{Lemma}
\newtheorem{claim}{Claim}

\begin{document}

\title{Sum-networks: Dependency on Characteristic of the Finite Field under Linear Network Coding}
\author{\IEEEauthorblockN{Niladri Das and Brijesh Kumar Rai}
\IEEEauthorblockA{Department of Electronics and Electrical Engineering\\ Indian Institute of Technology Guwahati, Guwahati, Assam, India\\
Email: \{d.niladri, bkrai\}@iitg.ernet.in}}
\maketitle

\begin{abstract}
Sum-networks are networks where all the terminals demand the sum of the
symbols generated at the sources. It has been shown that for any finite set/co-finite set of
prime numbers, there exists a sum-network which has a vector linear solution if and only if the characteristic of the finite field belongs to the given set. It has also been shown that for any
positive rational number $k/n$, there exists a sum-network which has
capacity equal to $k/n$. It is a natural question whether, for any
positive rational number $k/n$, and for any finite set/co-finite set of primes
$\{p_1,p_2,\ldots,p_l\}$, there exists a sum-network which has a capacity
achieving rate $k/n$ fractional linear network coding solution if and only
if the characteristic of the finite field belongs to the given set. We show that indeed there exists such a sum-network by constructing such a sum-network. 
\end{abstract}

\section{Introduction}

The concept of network coding originated in the year 2000 \cite{alswede}. The area of network coding has been an area of active research since then. It was shown that network coding is necessary to achieve the min-cut bound in a multicast network. It was further shown, in subsequent works \cite{li, jaggi, medrad}, that linear network coding is sufficient to achieve the capacity of a multicast network. However, in a more general scenario where there are multiple sources and multiple terminals in a network, and each terminal requires information generated at a subset of sources, capacity cannot always be achieved through linear network coding \cite{doug}. In \cite{doug}, the authors constructed a network in which the capacity is achievable through non-linear network coding, but not through linear network coding. However, since linear network coding has an inherent structure to it, and as it is easier to implement practically as opposed to non-linear network coding, the possibility of achieving a greater throughput by using non-linear network codes does not undermines the necessity to study linear network coding. 

Linear coding capacity is defined as the supremum of all achievable rates that can be achieved by using linear network coding. It has been shown that linear coding capacity may depend on the characteristic of the finite field. The Fano network presented in \cite{doug} has linear coding capacity equal to one when the characteristic of the finite field is two, but when the characteristic is not two, the linear coding capacity is $4/5$. Another network, presented in the same reference \cite{doug}, has linear coding capacity equal to one if the characteristic of the finite field is not two, but has linear coding capacity equal to $10/11$ if the characteristic of the finite field is two. Later, in \cite{rai2}, it was shown that for any given finite or co-finite set of prime numbers, there exists a network which has linear coding capacity equal to one if and only if the characteristic of the finite field belong to the given set.

In this paper, we consider linear network coding in the context of a function computation problem over a network. We consider a special case of general function computation problem over a network where each terminal demand the sum of symbols generated at all the sources. Such a network has been termed as a sum-network in the literature. The study of sum-networks include the function computation problem where all the terminals demand a linear combination of source symbols when the source symbols are from a finite field \cite{rai2}. 

There have been numerous studies on sum-networks. In \cite{rama}, it has been shown that if the number of sources or the number of terminals in a sum-network is two and there is at least one path from each source to each terminal, then the linear coding capacity of such a sum-network is at least $1$. However, when the number of sources or terminals are greater than two, at least one path from each source to each terminal does not suffice achievability of rate $1$. References \cite{rai1} and \cite{lang} independently presented a sum-network having three sources and three terminals which has capacity equal to $\frac{2}{3}$. In reference \cite{lang}, the authors also showed that if the sum-network having three source and three terminal has min-cut equal to two between any source and any terminal, then the linear coding capacity of the network is at least $1$. However, this property also fails to hold for larger sum-networks \cite{rai5}. For a class of sum-networks having $3$ sources and $n$ ($\geq 3$) terminals or $m$ ($\geq 3$) sources and $3$ terminals, it was shown that there exist networks which have linear coding capacity of the form $\frac{k}{k+1}$, where $k$ is a positive integer greater than or equal to $2$ \cite{rai4}. It was also conjectured that the capacity of sum-network in this class is always of the form $\frac{k}{k+1}$ \cite{rai4}. Till date this conjecture has neither been proven nor anyone has given a counter example. 

The work in \cite{rai2} relates the study of sum-networks with the networks  where every terminal demands a subset of information generated at the sources, and thereby proved various results for the sum-networks. It was shown that for any multiple-unicast network there exists a sum-network which is solvable if and only if the the multiple unicast network is solvable. This showed that many results like insufficiency of liner network coding, unachievability of network coding capacity, nonreversibility, which are true from multiple unicast networks, are also true from sum-networks.  Moreover, given a sum-network, it has also been shown that there exists a linear solvably equivalent multiple-unicast network. 

For sum-networks too, the linear coding capacity is dependent on the characteristic of the finite field. Reference \cite{rai1} and \cite{rai2} show that for any given finite or co-finite set of primes, there exists a sum-network which has linear coding capacity equal to one if and only if the characteristic of the finite field belong to the given set. It follows from this results that, to achieve the capacity by using linear network coding, it is necessary to know over which finite field the network code must be designed. However, to the best of our knowledge, for sum-networks, the dependency of linear coding capacity on the characteristic of the finite field, has been shown only for sum-networks which has capacity equal to one. We generalize this result for sum-networks with capacity equal to any positive rational number. Specifically we show that, for any given positive rational number $k/n$, and for any given set of finite or co-finite set of prime numbers, there exists a sum-network which has capacity equal to $k/n$, where the rate $k/n$ can be achieved using fractional linear network coding if and only if the characteristic of the finite field belongs to the given set. We note that, in a recent work, similar results have been shown for the class of multiple-unicast networks \cite{das3}.

It is known that the capacity of sum-networks can be any rational number \cite{rai3}. For any given positive rational number $k/n$, Rai \textit{et al.} \cite{rai3} constructed a sum-network which has capacity equal to $k/n$. Reference \cite{tripathy} and \cite{das} improved this result by showing that for some rational numbers, the same result can be obtained in a sum-network that uses significantly less number of sources and terminals. Given the fact that one construction of sum-networks, which are linearly solvable if and only if the characteristic of the finite field belong to a finite/co-finite set of primes, has already been given in \cite{rai1}, it is intuitive to think that a combination of the sum-networks presented in \cite{rai1} and \cite{rai3} would lead to the result we desire in this paper. However, we have found that this is not so imminent. For example, in Fig.~\ref{sum1} we reproduce a sum-network from \cite{rai1}. This network, denoted here by $\mathcal{N}_a$, has a vector linear solution for any vector dimension if and only if the characteristic of the finite field is $2$. In Fig.~\ref{sum2} we show a sum-network $\mathcal{N}_b$ which has network $\mathcal{N}_a$ as a sub-network. It can be shown that the capacity of the network $\mathcal{N}_b$ is $\frac{3}{5}$. We find that the sub-network $\mathcal{N}_a$ fails to enforce the restriction on the characteristic of the finite field in sum-network $\mathcal{N}_b$. As shown in Fig.\ref{sum3}, the rate $\frac{3}{5}$ is achievable in $\mathcal{N}_b$ using fractional linear network codes over any finite field.

\begin{figure*}
\centerline{\subfloat[Sum-network $\mathcal{N}_a$: vector linearly solvable if and only if characteristic of the finite field is $2$.]{\includegraphics[width
=0.3\textwidth]{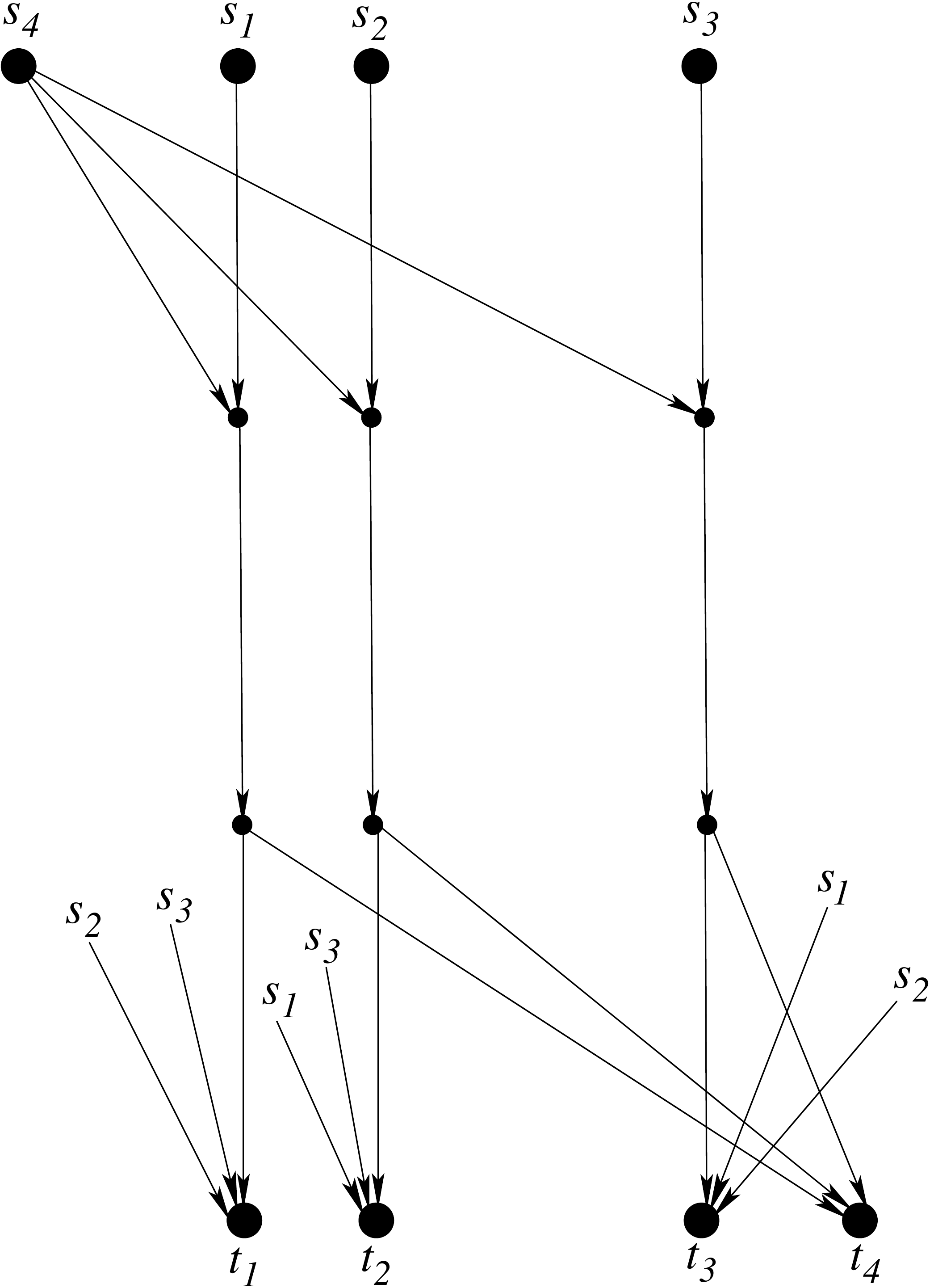}
\label{sum1}}
\hfil
\subfloat[Sum-network $\mathcal{N}_b$: The capacity reduces to $\frac{3}{5}$ by adding source $s_5$ and terminal $t_5$ to $\mathcal{N}_1$.]{\includegraphics[width=0.3\textwidth]{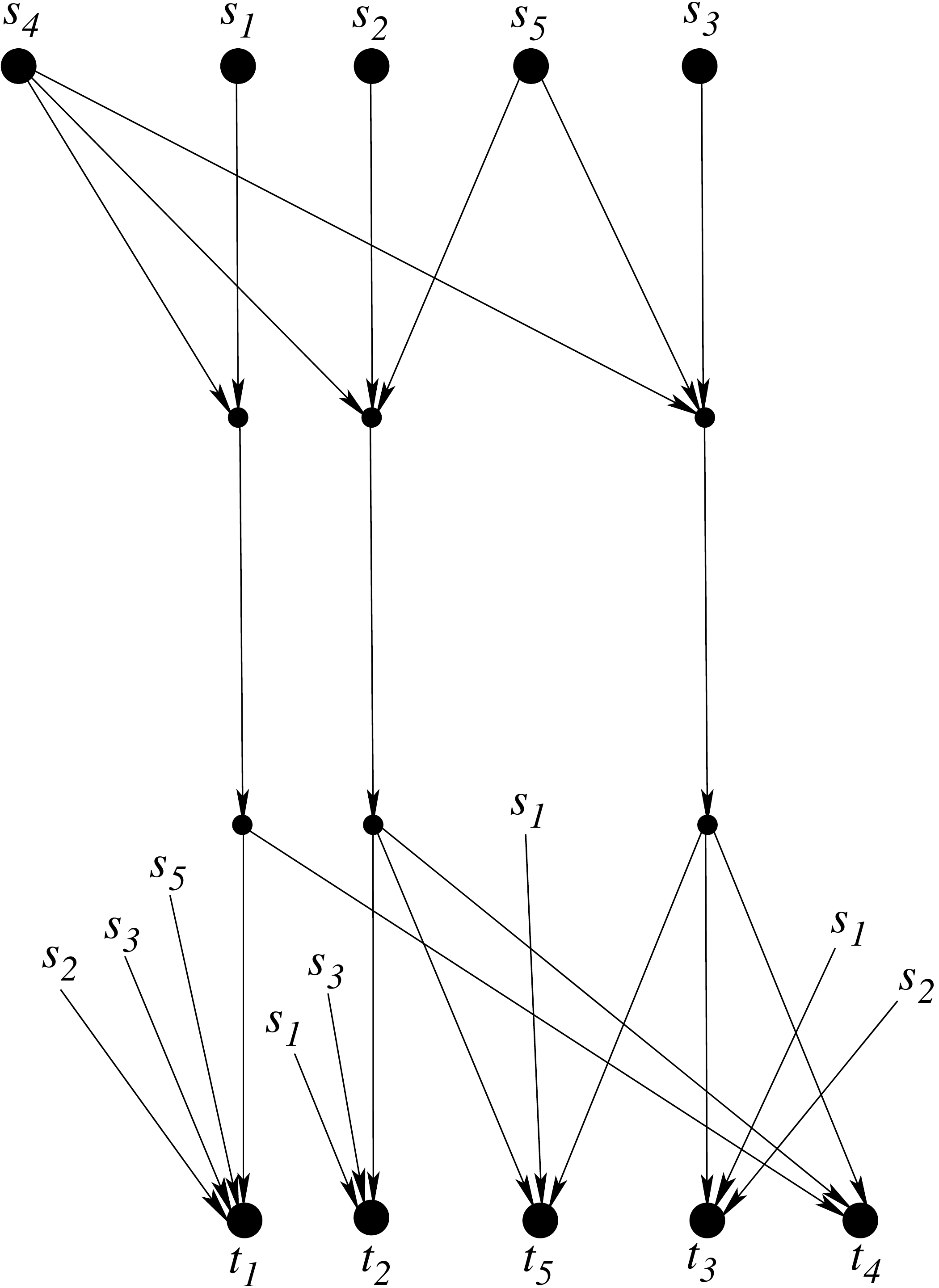}
\label{sum2}}
\hfil
\subfloat[The capacity $\frac{3}{5}$ of $\mathcal{N}_b$ is achievable by rate $\frac{3}{5}$ linear network code over all finite fields.]{\includegraphics[width=0.3\textwidth]{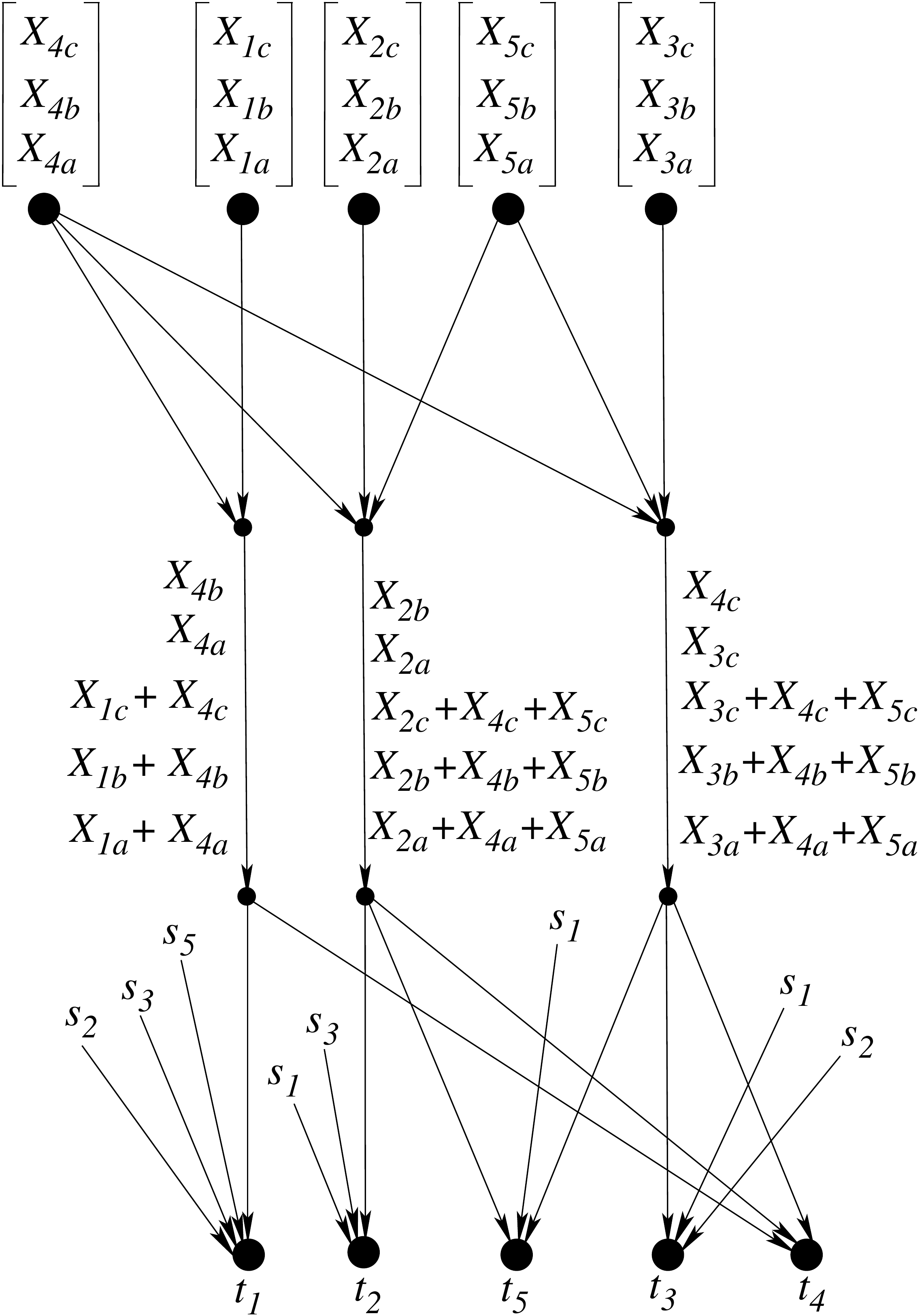}
\label{sum3}}}
\caption{The above sum-networks demonstrate that adding more sources and terminals to the sum-networks, which are linearly solvable if and only if the characteristic of the finite field is a certain number as shown in \cite{rai1} and \cite{rai2}, might remove the restriction on the characteristic of the field when the capacity gets reduced. An edge which is without a head represents an edge which emanates from the source as indicated on top of it. Sources are represented as a 3-length vectors, and the first, second and third component of a source message $X_i$ are denoted by $X_{ia},X_{ib}$ and $X_{ic}$ respectively. In Fig.~\ref{sum3} we have shown only the non-trivial part of the linear network code. The massage carried over three middle edges are shown alongside the edges.}
\label{sum}
\end{figure*}

In \cite{tripathy}, the authors presented two methods to construct sum-networks starting from graphs. In this paper, they made a remark (Remark 2), ``In Construction 2, we chose to connect the starred source only a carefully chosen subset of the bottleneck edges. If instead we had connected it to all the bottleneck edges (for instance), the starred terminal could only recover the source, under conditions on the characteristic of the field $\mathcal{A}$. Our choice of the subset of bottleneck edges avoids this dependence on the field characteristic.'' However, the authors have neither proven that such is the case when the starred source is connected to all of the bottleneck edges, nor they have shown any such example. We too have failed to construct any  such sum-networks using the construction methods given in \cite{tripathy}. For example, we consider the sum-network which was originally used as an example in \cite{tripathy} to demonstrate workings of the construction methods. We reproduce the sum-network here in Fig.~\ref{exp} but with the starred source connected to all of the bottleneck edges. This sum-network, as shown in \cite{tripathy} has capacity $2/5$. The Fig.~\ref{exp} shows that the sum-network has a rate $2/5$ achieving fractional linear network coding solution over all finite fields.
\begin{figure*}
\centering
\includegraphics[width=0.6\textwidth]{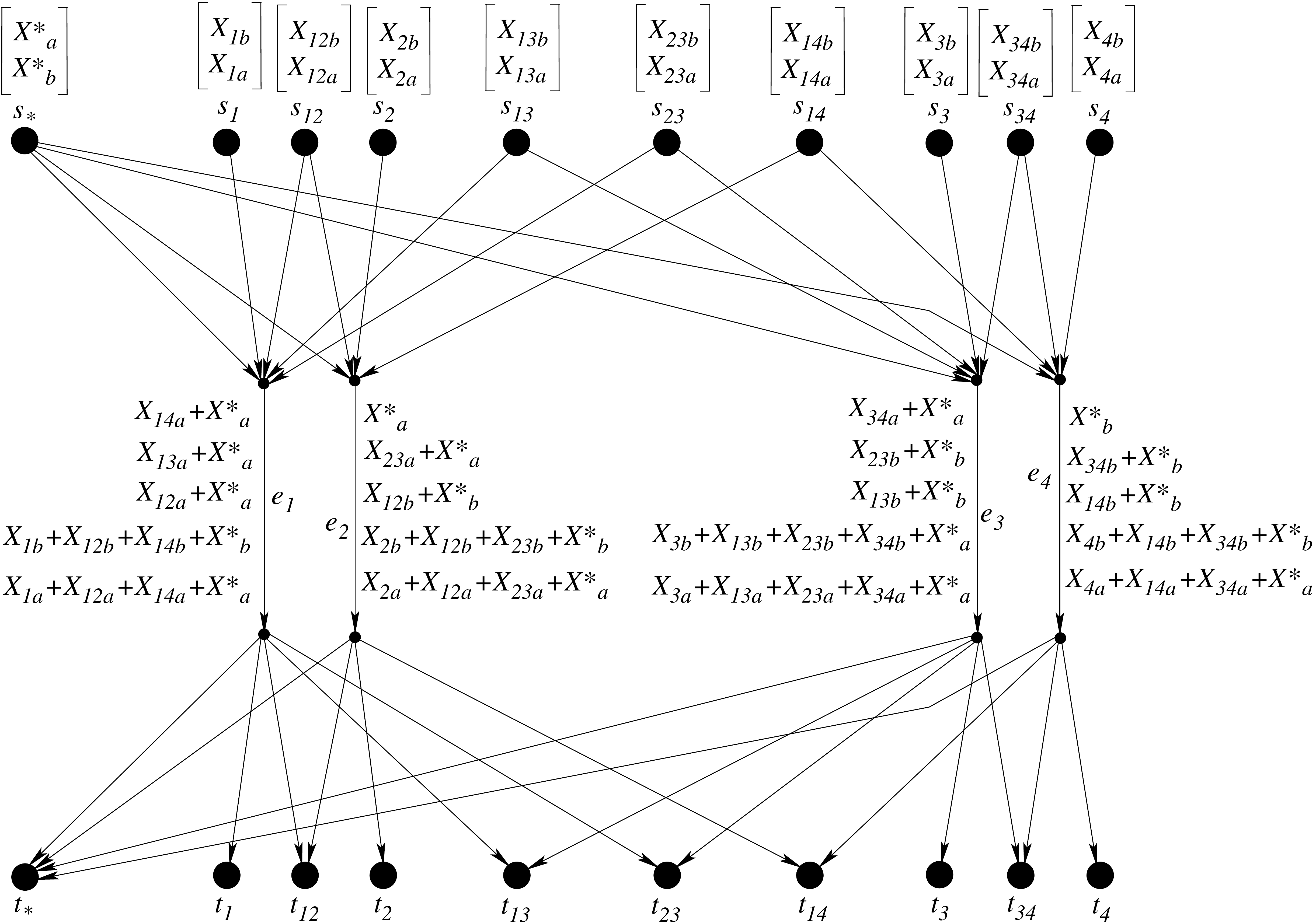}
\label{exp}
\caption{This sum-network is a presented with a modification of a sum-network presented in \cite{tripathy}. The modification is that in the original presentation the source $s_*$ has a path to the tail nodes of the edges $e_1,e_2$ and $e_3$; here $s_*$ additionally has a path to the tail of edge $e_4$. The capacity of this sum-network can be derived from \cite{tripathy}, and it is equal to $\frac{2}{5}$. The coding scheme shown here achieves the $2/5$ rate over all finite fields. Each source symbol is represented by a two length vector where the first and the second symbol is distinguished by the subscript $a$ and $b$ respectively. The massages sent over the edges $e_1,e_2,e_3$ and $e_4$ are shown beside the edges. The direct edges are not shown in the figure for the sake of clarity. Between any source and any terminal if there is no path in the figure, then it is assumed that there is a direct edge directed from that source to that terminal. It can be shown that by the help of these transmissions, along with the information coming through the direct edges, all the terminals compute the required sum.}
\end{figure*}

The remaining paper is divided into four sections. Section~\ref{sec1} contains the standard definitions of various terms used in this paper. In Section~\ref{sec2}, we show that for any set of primes $\{p_1,p_2,\ldots,p_w\}$, and any positive rational number $k/n$ there exists a sum-network which has a rate $k/n$ achieving fractional linear network coding solution if and only if the characteristic of the finite field belongs to the given set. In Section~\ref{sec3}, we show that for any set of primes $\{p_1,p_2,\ldots,p_w\}$, and any positive rational number $k/n$ there exists a sum-network which has a rate $k/n$ achieving fractional linear network coding solution if and only if the characteristic of the finite field does not belong to the given set. The paper is concluded in Section~\ref{sec4}.

\section{Preliminaries}\label{sec1}
A network is represented by a graph $G(V,E)$. The set of nodes $V$ is partitioned into three disjoint sets: $S,T$ and $V^\prime$. $S$ is the set of sources. The sources are assumed not to have any incoming edge. $T$ is the set of terminals which are assumed not to have any outgoing edge. All other nodes in $V$ which are neither a source node nor a terminal node are in the set $V^\prime$. The elements of $V^\prime$ are called the intermediate nodes. $(u,v,i)$ denotes the $i^\text{th}$ edge between the nodes $u$ and $v$. In case of a single edge between two nodes $u$ and $v$, the edge is represented by $(u,v)$. For an edge $e=(u,v,i)$, $head(e)$ represents the node $u$ and $tail(e)$ represents the node $v$. For any node $v\in V$, $In(v)$ denotes the set $\{e\in E | head(e) = v \}$. For any edge $e\in E$, the information carried by $e$ is denoted by $Y_{e}$. Each source $s_i\in S$ generates an i.i.d random process $X_i$ uniformly distributed over a finite alphabet. Any source process is independent of all other source processes. In a sum-network all terminals demand the same symbol, which is the sum of the symbols generated at the sources, i.e.\ $\sum_{i=1}^{|S|} X_i$.

A network code is assignment of edge functions, one for every edge; and decoding functions for the terminals, one for every terminal. An $(r,l)$ fractional network code over an alphabet $\mathcal{A}$ is described as follows. For any edge $e$ emanating from a source, there is a function $f_e: \mathcal{A}^r \rightarrow \mathcal{A}^l$. For any edge $e$ emanating from an intermediate node, there is a function $f_e: \mathcal{A}^{l|In(tail(e))|} \rightarrow \mathcal{A}^l$. And for any terminal $t\in T$, there is a function $f_t: \mathcal{A}^{l|In(t))|} \rightarrow \mathcal{A}^r$.

In an $(r,l)$ fractional linear network code over a finite field $\mathbb{F}_q$, the above edge functions and decoding functions are linear functions. An $(r,l)$ fractional linear network code, in terms of a message transmitted over an edge and the message decoded by a terminal is described as follows. Let the message transmitted over an edge $e$ be denoted by $Y_e$. For any edge $e$ emanating from a source, $Y_e = A_{s_i,e}X_i$ where $Y_e\in \mathbb{F}_q^l, A_{s_i,e} \in \mathbb{F}_q^{l\times r}$, and $X_i\in \mathbb{F}_q^r$. For any edge emanating from an intermediate node $Y_e = A_{e_1,e}Y_{e_1} + A_{e_2,e}Y_{e_2} + \cdots + A_{e_n,e}Y_{e_n}$ where $In(tail(e)) = \{e_1,e_2,\ldots e_n\}$, $Y_e, Y_{e_1}, Y_{e_2}, \ldots , Y_{e_n} \in \mathbb{F}_q^l$ and $A_{e_1,e}, A_{e_2,e},\ldots, A_{e_n,e} \in \mathbb{F}_q^{l\times l}$. For any terminal $t_i\in T$,  $Z_{t_i} = A_{e_1,t_i}Y_{e_1} + A_{e_2,t_i}Y_{e_2} + \cdots + A_{e_n,t_i}Y_{e_n}$ where $In(t_i)= \{e_1,e_2,\ldots e_n\}$, $Z_{t_i}\in \mathbb{F}_q^r$, $Y_e, Y_{e_1}, Y_{e_2}, \ldots , Y_{e_n} \in \mathbb{F}_q^l$ and $A_{e_1,t_i}, A_{e_2,t_i},\ldots, A_{e_n,t_i} \in \mathbb{F}_q^{r\times l}$.

Let a block of $r$ symbols from a source $s_i$ be represented by $X_i=(X_{i1}, X_{i2}, \ldots, X_{ir}).$ A sum-network is said to have an $(r,l)$ fractional network coding solution over $\mathcal{A}$ if using an $(r,l)$ fractional network code all terminals can retrieve the sum $\sum_{i=1}^{|S|} X_i$ of the source processes. Analogously, a sum-network is said to have an $(r,l)$ fractional linear network coding solution over $\mathbb{F}_q$ if there exists an $(r,l)$ fractional linear network code over $\mathbb{F}_q$ such that all terminals can compute the sum of the source processes i.e. $\sum_{i=1}^{|S|} X_i$. The ratio $\frac{r}{l}$ is called the rate. A $(dr,dl)$ fractional (linear) network coding solution, where $d$ is any non-zero positive integer, is referred to as a rate $\frac{r}{l}$ achieving fractional (linear) network coding solution. $\frac{r}{l}$ is called a rate achievable through (linear) network coding if there exists a rate $\frac{r}{l}$ achieving fractional (linear) network coding solution. The capacity of a sum-network is defined as the supremum of all achievable rates. The linear coding capacity over a finite field $\mathbb{F}_q$ is the supremum of all rates achievable through linear network coding. The linear coding capacity is defined as supremum of the linear coding capacities over all finite fields. If a sum-network has a $(k,k)$ (for any positive integer $k\geq 1$) fractional (linear) network coding solution, then the sum-network is said to (linearly) solvable.

\section{A Sum-network (of capacity $k/n$) which has a $(k,n)$ fractional linear network coding solution only over a field of characteristic from a given finite set of primes}\label{sec2}
In this section we show that for any positive rational number $k/n$ and for any set of prime numbers $\{p_1,p_2,\ldots,p_w\}$, there exists a sum-network of capacity equal to $k/n$ such that the sum-network has a rate $\frac{k}{n}$ achieving fractional linear network coding solution if and only if the characteristic of the finite field belongs to the given set.

\begin{figure*}
\centering
\includegraphics[width=\textwidth]{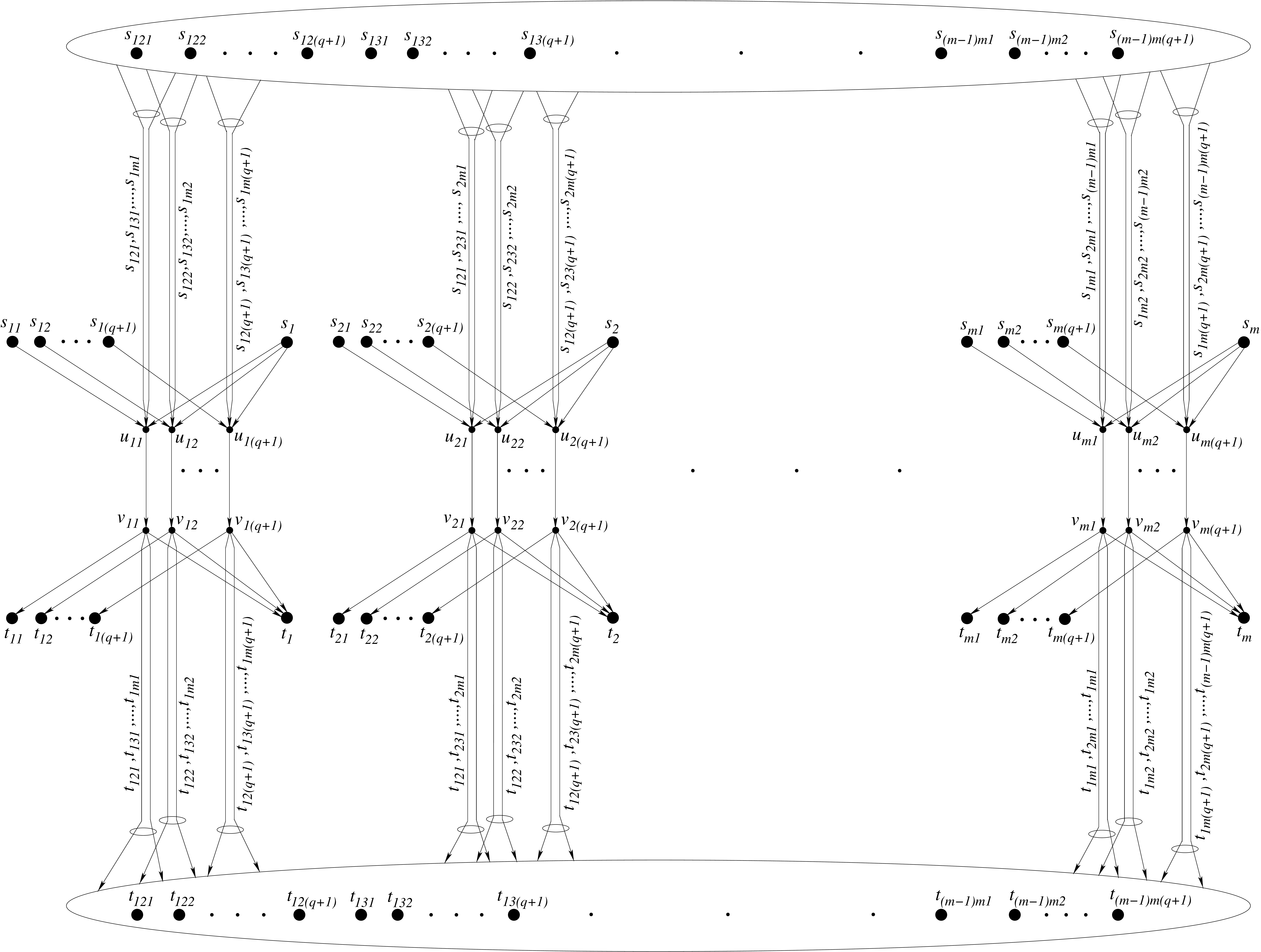}
\caption{A sum-network $\mathcal{N}_1$ whose capacity is $\frac{2}{m+1}$. $\mathcal{N}_1$ has a rate $\frac{2}{m+1}$ achieving fractional linear network coding solution if and only if the characteristic of the finite field divides $q$.}
\label{sumx}
\end{figure*}

Towards this end, we consider the sum-network $\mathcal{N}_1$ shown in Fig.~\ref{sumx}. 
$\mathcal{N}_1$ has $m + m(q+1) + \binom{m}{2}(q+1)$ many sources and the same number of terminals. The set of sources, $S$ is classified into three disjoint sets, namely $S_1,S_2$ and $S_3$; where, $S_1 = \{s_{i}|1\leq i\leq m\}$, $S_2 = \{ s_{ij} | 1\leq i\leq m, 1\leq j\leq (q+1) \}$ and $S_3 = \{ s_{ijx} | 1\leq i< m, i< j\leq m, 1\leq x\leq q+1 \}$. Similarly the set of terminals, $T$ is classified into $T_1,T_2$ and $T_3$, where $T_1 = \{t_{i}|1\leq i\leq m\}$, $T_2 = \{ t_{ij} | 1\leq i\leq m, 1\leq j\leq (q+1) \}$ and $T_3 = \{ t_{ijx} | 1\leq i < m, i< j\leq m, 1\leq x\leq q+1 \}$. Apart from the sources and terminals, there are $2(m)(q+1)$ intermediate nodes. These nodes are $\{u_{ij} | 1\leq i \leq m, 1\leq j\leq (q+1) \}$ and $\{v_{ij} | 1\leq i \leq m, 1\leq j\leq (q+1) \}$. The set of edges in the network are listed below.
\begin{enumerate}

\item $e_{ij} = (u_{ij},v_{ij})$ for $1\leq i\leq m$ and $1\leq j\leq (q+1)$.
\item $(s_i,u_{ij})$ for $1\leq i\leq m$ and $1\leq j\leq (q+1)$.
\item $(s_{ij},u_{ij})$ for $1\leq i\leq m$ and $1\leq j\leq (q+1)$.
\item $(s_{ijx},u_{ix})$ and $(s_{ijx},u_{jx})$ for $1\leq i < m, i< j\leq m$ and $1\leq x\leq (q+1)$.
\item $(v_{ij},t_i)$ for $1\leq i\leq m$ and $1\leq j\leq (q+1)$.
\item $(v_{ij},t_{ij})$ for $1\leq i\leq m$ and $1\leq j\leq (q+1)$.
\item $(v_{ix},t_{ijx})$ and $(v_{jx},t_{ijx})$ for $1\leq i < m, i< j\leq m$ and $1\leq x\leq (q+1)$.
\end{enumerate}
The edges $e_{ij}$ for $1\leq i\leq m$ and $1\leq j\leq (q+1)$ are shown as the middle edges in the network. It can be seen that the sources $s_i$ has an edge connecting to all of the middle edges $e_{ij}$ for $1\leq j\leq (q+1)$. Similarly the terminal $t_i$ has paths from all the the edge $\{e_{ij} | 1\leq j\leq (q+1) \}$. The source $s_{ij}$ has a connection to only one of the middle edges which is $e_{ij}$. Analogously the terminal $t_{ij}$ has an connection from only one middle edge $e_{ij}$. For any tuple $(i,x)$ where $1\leq i < m, i < x \leq m$, there exists $(q+1)$ sources in $S_3$. These sources are noted as $s_{ixj}$ where $1\leq j\leq (q+1)$. The source $s_{ixj}$ has paths to the middle edges $e_{ij}$ and $e_{xj}$. In the same way, the terminal $t_{ixj}$ has a connection from only two middle edges, which are $e_{ij}$ and $e_{xj}$. 

There are various direct edges between the sources and the terminals present in the network. The list of such edges are given below. Note that to reduce clumsiness, these direct edges have not been shown in Fig~\ref{sumx}. Say $S_{ij}$ denotes the set of sources which have a path to $tail(e_{ij})$ for $1\leq i\leq m$ and $1\leq j\leq (q+1)$. Here,
\begin{IEEEeqnarray*}{l}
S_{ij} = \{ s_i, s_{ij}\} \cup \{ s_{xij} | 1\leq x < i \} \cup \{ s_{ixj} | i < x \leq m \}
\end{IEEEeqnarray*}
Also let $S = S_1 \cup S_2 \cup S_3$.
\begin{enumerate}
\setcounter{enumi}{7}
\item $(s,t_i)$ for $1\leq i\leq m$ and $s \in S, s \notin \{ \cup_{j=1}^{q{+}1} S_{ij} \}$.
\item $(s,t_{ij})$ for $1\leq i\leq m, 1\leq j\leq (q+1)$ and $s \in S, s \notin S_{ij}$.
\item $(s,t_{ixj})$ for $1\leq i < m, i < x \leq m, 1\leq j\leq (q+1)$ and $s\in S, s \notin \{S_{ij}\cup S_{xj}\}$. 
\end{enumerate}

\begin{claim}\label{cla1}
The sum-network $\mathcal{N}_1$ shown in Fig.~\ref{sumx} has capacity equal to $\frac{2}{m+1}$; and has a rate $\frac{2}{m+1}$, fractional linear network coding solution if the characteristic of the finite field divides $q$.
\end{claim}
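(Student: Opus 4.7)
My plan proceeds in two parts: first I construct an explicit rate-$\tfrac{2}{m+1}$ linear network coding solution when $\mathrm{char}(\mathbb{F}) \mid q$, which gives both the achievability statement and the lower bound on the capacity; then I prove the matching upper bound $\tfrac{2}{m+1}$ via an algebraic argument on the coding coefficients. For the achievability, I take $r = 2$ and $l = m+1$, and work over a finite field $\mathbb{F}$ with $\mathrm{char}(\mathbb{F}) \mid q$ and $|\mathbb{F}| + 1 \ge m$ (so that $\mathbb{F}^2$ has enough distinct directions). Pick vectors $\beta_1, \ldots, \beta_m \in \mathbb{F}^2$ that are pairwise linearly independent. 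For each middle edge $e_{ij}$, define $Y_{ij} \in \mathbb{F}^{m+1}$ so that its first two coordinates are $\sum_{s \in S_{ij}} X_s$ and its remaining $m - 1$ coordinates, indexed by $x \in \{1, \ldots, m\} \setminus \{i\}$, hold the scalars $\beta_x^{\mathsf{T}} X_{s_{\mathrm{pair}(i,x,j)}}$, where $s_{\mathrm{pair}(i,x,j)}$ denotes the unique pair source linking rows $i$ and $x$ in column $j$.

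The decoding verification has three cases. Terminal $t_{ij}$ reads its sum directly from the first block of $Y_{ij}$. Terminal $t_{ixj}$ adds the first blocks of $Y_{ij}$ and $Y_{xj}$, obtaining $\sum_{s \in S_{ij} \cup S_{xj}} X_s + X_{s_{\mathrm{pair}(i,x,j)}}$ since the only common source is $s_{\mathrm{pair}(i,x,j)}$; it then reconstructs $X_{s_{\mathrm{pair}(i,x,j)}}$ from the two scalars $\beta_x^{\mathsf{T}} X_{s_{\mathrm{pair}(i,x,j)}}$ and $\beta_i^{\mathsf{T}} X_{s_{\mathrm{pair}(i,x,j)}}$ read off from the extra blocks of $Y_{ij}$ and $Y_{xj}$, by inverting the nonsingular $2 \times 2$ matrix whose rows are $\beta_x^{\mathsf{T}}$ and $\beta_i^{\mathsf{T}}$, and subtracts it off. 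Terminal $t_i$ sums the first blocks of $Y_{ij}$ over $j = 1, \ldots, q+1$; the source $X_{s_i}$ appears $q+1$ times (since $s_i \in S_{ij}$ for all $j$), while every other source in $\cup_j S_{ij}$ appears exactly once, so the result is $(q+1) X_{s_i} + \sum_{s \in \cup_j S_{ij}, s \neq s_i} X_s$, which equals the target $\sum_{s \in \cup_j S_{ij}} X_s$ precisely because $q + 1 \equiv 1 \pmod{\mathrm{char}(\mathbb{F})}$.

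For the upper bound $r/l \le \tfrac{2}{m+1}$, I would argue from the decoding constraints. Terminal $t_{ij}$'s constraint forces all $\{A^{ij}_s\}_{s \in S_{ij}}$ to share a common left inverse $D_{ij} \in \mathbb{F}^{r \times l}$, yielding the decomposition $A^{ij}_s = A^*_{ij} + N^{ij}_s$ with $D_{ij} N^{ij}_s = 0$ and $\dim \ker D_{ij} = l - r$. Each pair terminal $t_{ixj}$ then imposes a compatibility constraint inside $\ker D_{ij}$ coupling $N^{ij}_{s_{\mathrm{pair}(i,x,j)}}$ and $N^{xj}_{s_{\mathrm{pair}(i,x,j)}}$. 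Aggregating these $m-1$ constraints (one for each $x \ne i$) for a fixed $(i,j)$ should yield $l - r \ge (m-1)r/2$, equivalently $l/r \ge (m+1)/2$, giving the desired upper bound.

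The main obstacle I anticipate is the rigorous extraction of the dimension bound $(m-1)r/2$ from the compatibility constraints: the constraints coming from different pairings $x$ are not obviously independent, so I would need to carefully analyze the resulting linear system (perhaps by exhibiting an explicit family of independent constraints, or via a generic-rank argument) to derive the tight bound. The achievability is comparatively direct once the field is chosen large enough, which is always possible by enlarging to $\mathbb{F}_{p^k}$ for any prime $p \mid q$ and sufficiently large $k$.
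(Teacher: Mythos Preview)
Your achievability construction is essentially the same idea as the paper's: the first two coordinates of $Y_{e_{ij}}$ carry $\sum_{s\in S_{ij}} X_s$, and the remaining $m-1$ coordinates carry one scalar per pair-source incident to $e_{ij}$, so that $t_{iyj}$ can reconstruct $X_{iyj}$ and subtract it from $Y'_{ij}+Y'_{yj}$. The one unnecessary complication is your use of $m$ pairwise independent directions $\beta_1,\ldots,\beta_m\in\mathbb{F}^2$, which forces $|\mathbb{F}|+1\ge m$. The paper avoids this restriction entirely by a deterministic coordinate split: for the pair-source $s_{iyj}$ with $i<y$, edge $e_{ij}$ carries the first coordinate $X_{iyja}$ and edge $e_{yj}$ carries the second coordinate $X_{iyjb}$. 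Then $t_{iyj}$ simply concatenates, and the code works over \emph{every} finite field whose characteristic divides $q$, which is what the claim actually asserts.

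Your upper-bound plan, however, is both harder than necessary and does not prove what is claimed. First, by working with the coding matrices $A^{ij}_s$ you are bounding only the \emph{linear} coding capacity, whereas the claim concerns the capacity itself; you would need a separate argument for nonlinear codes. Second, as you yourself note, extracting the dimension bound $(m-1)r/2$ from the compatibility constraints across the $m-1$ pairings is not obviously rigorous. The paper sidesteps both issues with a short information-counting argument that applies to any $(r,l)$ code: from $Y_{e_{ij}}$ and $Y_{e_{yj}}$ the terminals $t_{ij},t_{yj},t_{iyj}$ can compute three sums whose combination yields $X_{iyj}$; hence the collection $\{Y_{e_{ij}}\}$ determines every $X_{iyj}$ and every ``$t_{ij}$-sum'' $X_i+X_{ij}+\sum_{x}X_{\text{pair}}$. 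Adjoining the $m$ sources $\{X_i\}$ then determines all of $\{X_{ij}\}$ as well, so $\{X_i\}\cup\{Y_{e_{ij}}\}$ determines every source. Cardinality comparison gives
\[
rm + lm(q{+}1)\ \ge\ rm + rm(q{+}1) + \tfrac{rm(m-1)(q{+}1)}{2},
\]
which simplifies immediately to $r/l\le 2/(m{+}1)$. This is the step you should replace your matrix analysis with.
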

\begin{IEEEproof}
Consider an $(r,l)$ fractional linear network coding solution of the network. We use the notation $\{A\}\rightarrow \{B\}$ to indicate that the elements of the set $B$ can be computed from the elements of the set $A$. We note that a similar notation has been used earlier in \cite{doug}. The message transmitted by the edge $e_{ij}$ is denoted by $Y_{e_{ij}}$. For the sum-network to be solvable, the terminal $t_{iyj}$ for $1\leq i < m, i <y \leq m, 1\leq j\leq (q+1)$ must compute the following from $Y_{e_{ij}}$ and $Y_{e_{yj}}$:
\begin{IEEEeqnarray}{l}
X_i + X_y + X_{ij} + X_{yj} + \sum_{x=1}^{i-1} (X_{xij} + X_{xyj})\IEEEnonumber \\+\> \sum_{x=i+1}^{y-1} (X_{ixj} + X_{xyj}) + \sum_{x=y+1}^m (X_{ixj} + X_{yxj}) + X_{iyj}\label{eqn1}
\end{IEEEeqnarray}
However, from $Y_{e_{ij}}$, the terminal $t_{ij}$ computes 
\begin{equation}
X_i + X_{ij} + \sum_{x=1}^{i-1} X_{xij} + \sum_{x=i+1}^{y-1} X_{ixj} + X_{iyj} + \sum_{x=y+1}^{m} X_{ixj}\label{eqn2}
\end{equation}
and the terminal $t_{yj}$, from $Y_{e_{yj}}$, computes
\begin{equation}
X_y + X_{yj} + \sum_{x=1}^{i-1} X_{xyj} + X_{iyj} + \sum_{x=i+1}^{y-1} X_{xyj} + \sum_{x=y+1}^m X_{yxj}\label{eqn3}
\end{equation}
Now, first by adding equation (\ref{eqn2}) and equation (\ref{eqn3}), and then by subtracting equation (\ref{eqn1}) from the resultant it can be seen that the terminal $t_{iyj}$ can compute the source message $X_{iyj}$. Then we have,
\begin{IEEEeqnarray*}{l}
\{Y_{e_{ij}} | 1\leq i\leq m, 1\leq j\leq q{+}1\}\\
\rightarrow\\
\bigl\{\{X_i {+} X_{ij} {+} \sum_{x=1}^{i-1} X_{xij} {+} \sum_{x=i+1}^m X_{ixj} | {1{\leq} i{\leq} m,1\leq j\leq q{+}1}\}\\
\hfill \text{[because of the terminal $t_{ij}$]},\\
\{X_{ixj}| 1\leq i <m, i<x\leq m, 1\leq j\leq q{+}1\}  \\ 
\hfill \text{ [by equation~(\ref{eqn2}) + equation~(\ref{eqn3}) - equation~(\ref{eqn1})  ]}\bigr\}
\end{IEEEeqnarray*}
Hence,
\begin{IEEEeqnarray*}{l}
\{X_i | 1\leq i\leq m \}, \{Y_{e_{ij}} | 1\leq i\leq m, 1\leq j\leq q{+}1\}\\
\rightarrow\\
\bigl\{\{X_i | 1\leq i\leq m \}, \{X_{ij} | 1\leq i\leq m, 1\leq j\leq q{+}1 \},\\ \{X_{ixj}| 1\leq i <m, i<x\leq m, 1\leq j\leq q{+}1\}\bigr\}\IEEEyesnumber \label{mq1}
\end{IEEEeqnarray*}
Since, in a function, the set of possible values of the output of the function cannot be greater than the set of possible values that the argument can take, we must have,
\begin{IEEEeqnarray*}{l}
rm + lm(q+1) \geq rm + rm(q+1) +  \frac{rm(m-1)(q+1)}{2}\\
or,\, lm(q+1) \geq rm(q+1) +  \frac{rm(m-1)(q+1)}{2}\\
or,\, l \geq r +  \frac{r(m-1)}{2}\\
or,\, \frac{l}{r} \geq 1 + \frac{(m-1)}{2}\\
or,\, \frac{l}{r} \geq \frac{m+1}{2}\\
or,\, \frac{r}{l} \leq \frac{2}{m+1}
\end{IEEEeqnarray*}

We now show that $\mathcal{N}$ has a $(2,m+1)$ fractional linear network coding solution if the characteristic of the finite field divides $q$. Say the finite field is $\mathbb{F}_{p^a}$ where $p$ is a prime number, $p$ divides $q$, and $a$ is any non-zero positive integer. Since $r=2$, assume each source message is a $2$-length vector and each edge carry an $(m+1)$-length vector; where all elements of the vectors are from $\mathbb{F}_{p^a}$. Say the first and second symbols of $X_i, X_{ij}$ and $X_{yxj}$ are denoted by $X_{ia}, X_{ija},X_{yxja}$ and $X_{ib}, X_{ijb}, X_{yxjb}$ respectively where $1\leq i\leq m, 1\leq y < m, i < x \leq m, 1\leq j\leq (q{+}1)$.

In the first two time slots, all the edges $e_{ij}$ carries $ Y_{ij}^\prime =  X_i + X_{ij} + \sum_{x=1}^{i-1} X_{xij} + \sum_{x=i+1}^m X_{ixj}$  for $1\leq i\leq m$ and $1\leq j\leq (q+1)$; and all direct edges simple forwards the two symbols from the source at its head.
In the next $m-1$ time slots, edge $e_{ij}$ for $1\leq i\leq m, 1\leq j\leq (q+1)$ carries the $m-1$ symbols contained in the sets $\{X_{ixja}| i < x \leq m \}$ and $\{X_{xijb}| 1\leq x < i\}$. Note that since there is no source of the form $s_{iij}$ for $1\leq i\leq m$ and $1\leq j\leq (q+1)$, these two sets contains exactly $m-1$ elements. 

After the first two time slots, from the edge $e_{ij}$, for $1\leq i\leq m, 1\leq j\leq (q+1)$, terminal $t_i\in T_1$ receives, $Y_{ij}^\prime$. Summing this information for $1\leq j\leq (q+1)$ it gets:
\begin{IEEEeqnarray*}{l}
\sum_{j=1}^{q{+}1} Y_{ij}^\prime = (q+1)X_i + \sum_{j=1}^{q+1} ( X_{ij} + \sum_{x=1}^{x=i-1} X_{xij} +  \sum_{x=i+1}^m X_{ixj}
\end{IEEEeqnarray*}
Since $p$ divides the $q$, $q=0$ in $\mathbb{F}_{p^a}$ and hence $(q+1)=1$. Then, it can be seen that on summing the vectors coming from the direct edges with $\sum_{j=1}^{q{+}1} Y_{ij}^\prime$, terminal $t_i$ computes the sum.

Terminal $t_{ij}\in T_2$, from edge $e_{ij}$ for $1\leq i\leq m$ and $1\leq j\leq (q+1)$, after the first two time slots, receives $Y_{e_{ij}}^\prime$. On summing $Y_{e_{ij}}^\prime$ with the vectors coming in from the direct edges, it can be seen that terminal $t_{ij}$ can retrieve the required sum.

Now consider the terminal $t_{iyj}$ for $1\leq i < m, i< y\leq m$ and $1\leq j\leq q{+}1$. Note that $t_{iyj}$ is connected to the head nodes of the edges $e_{ij}$ and $e_{yj}$. Hence, after the first two time slots, it receives $Y_{e_{ij}}^\prime$ and $Y_{e_{yj}}^\prime = X_y + X_{yj} + \sum_{x=1}^{x=y-1} X_{xyj} + \sum_{x=y+1}^m X_{yxj}$. Since $y>i$ in $t_{iyj}$, in the rest of the $m-1$ time slots, from $e_{ij}$ it receives $X_{iyja}$, and from $e_{yj}$ it receives $X_{iyjb}$, and thus it gets the message $X_{iyj}$. Now it can be seen that by performing the operation $Y_{e_{ij}}^\prime + Y_{e_{yj}}^\prime - X_{iyj}$ terminal $t_{iyj}$ computes the sum shown in equation (\ref{eqn1}). On summing this information with the vectors coming from the direct edges it can be seen that the terminal $t_{iyj}$ can compute the required sum.
\end{IEEEproof}

\begin{claim}\label{cla2}
The linear coding capacity of the sum-network $\mathcal{N}_2$ shown in Fig.~\ref{sumx} is strictly less than $\frac{2}{m + 1}$ if the characteristic of the finite field does not divides $q$. 
\end{claim}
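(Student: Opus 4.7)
\begin{IEEEproof}[Proof proposal]
My plan is to assume that a rate $2/(m+1)$ fractional linear network coding solution exists over a field whose characteristic does not divide $q$, and derive a contradiction by sharpening the dimension bound established in the proof of Claim~\ref{cla1}. The new ingredient is to exploit the decoders at the terminals $t_i \in T_1$, which were not used in the converse argument of Claim~\ref{cla1}.

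First, fix an $(r,l)$ fractional linear solution with $r/l = 2/(m+1)$ over a field of characteristic not dividing $q$. From the decoder at terminal $t_{ij} \in T_2$, there exists a linear map $g_{ij}$ satisfying $g_{ij}(Y_{e_{ij}}) = X_i + X_{ij} + \sum_{x<i} X_{xij} + \sum_{x>i} X_{ixj}$, exactly as in~\eqref{eqn2}. For terminal $t_i \in T_1$, the direct edges provide every source outside $\cup_j S_{ij}$; so by independence of the source messages and linearity of the code, there must be a linear function $F_i$ of $(Y_{e_{ij}})_{j=1}^{q+1}$ alone which equals $X_i + \sum_j X_{ij} + \sum_{j,x<i} X_{xij} + \sum_{j,x>i} X_{ixj}$. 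Computing $\sum_j g_{ij}(Y_{e_{ij}}) - F_i$ then yields exactly $qX_i$, since $X_i$ is counted $(q+1)-1$ times and the other source symbols cancel identically. Because $q$ is nonzero in the field, this expresses $X_i$ as a linear function of $(Y_{e_{ij}})_{j=1}^{q+1}$.

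Applying this to every $i \in \{1, \ldots, m\}$, the implication of Claim~\ref{cla1} strengthens to $\{Y_{e_{ij}}\} \to \{X_i\}, \{X_{ij}\}, \{X_{ixj}\}$, removing the $\{X_i\}$'s from the left-hand side of~\eqref{mq1}. The dimension bound then becomes $lm(q+1) \geq rm + rm(q+1) + r\binom{m}{2}(q+1)$, which rearranges to $r/l \leq 2(q+1)/[(q+1)(m+1)+2] < 2/(m+1)$, contradicting $r/l = 2/(m+1)$. Hence no rate $2/(m+1)$ linear solution exists, so the linear coding capacity is strictly less than $2/(m+1)$. The main subtlety I expect is the precise justification that $F_i$ is a function of the middle-edge symbols alone, which must come from equating the coefficients of each independent source message in the linear decoder at $t_i$ (the coefficients of sources outside $\cup_j S_{ij}$ are absorbed by the direct edges, so the middle-edge part of the decoder must produce the entire remaining portion of the sum); once that step is nailed down, the algebra producing $qX_i$ is immediate.
\end{IEEEproof}
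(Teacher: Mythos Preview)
Your proposal is correct and follows essentially the same approach as the paper: exploit the decoders at $t_i$ and $t_{ij}$ to extract $qX_i$ from the middle-edge messages, invert $q$ when the characteristic does not divide it, and then rerun the dimension count of Claim~\ref{cla1} without $\{X_i\}$ on the left to obtain $r/l \leq 2/\bigl(m+1+\tfrac{2}{q+1}\bigr) < 2/(m+1)$. The only cosmetic difference is that you phrase it as a contradiction at rate exactly $2/(m+1)$ whereas the paper derives the sharper upper bound directly for every $(r,l)$; your direct computation of $\sum_j g_{ij}-F_i=qX_i$ is in fact slightly cleaner than the paper's route of first isolating the $X_{ixj}$'s.
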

\begin{IEEEproof}
We have the following from the sum-network.
\begin{IEEEeqnarray*}{l}
\{Y_{e_{ij}} | 1\leq i\leq m, 1\leq j\leq q{+}1\}\\
\rightarrow\\
\bigl\{\{ X_i {+} \sum_{j=1}^{q{+}1} X_{ij} {+} \sum_{x=1}^{x=i-1}\sum_{j=1}^{q{+}1}\! X_{xij} {+} \sum_{x=i+1}^{m}\sum_{j=1}^{q{+}1}\! X_{ixj} | 1{\leq} i{\leq} m \}\\
\hfill \text{[because of the terminal $t_{i}$]},\\
\{X_i {+} X_{ij} {+} \sum_{x=1}^{i-1} X_{xij} {+} \sum_{x=i+1}^m X_{ixj} | {1\leq i\leq m,1\leq j\leq q{+}1}\}\\
\hfill \text{[because of the terminal $t_{ij}$]},\\
\{X_{ixj}| 1\leq i <m, i<x\leq m, 1\leq j\leq q{+}1\}  \\ 
\hfill \text{ [by equation~(\ref{eqn2}) + equation~(\ref{eqn3}) - equation~(\ref{eqn1})  ]}\bigr\}\\[12pt]
or, \{Y_{e_{ij}} | 1\leq i\leq m, 1\leq j\leq q{+}1\}\\
\rightarrow\\
\bigl\{\{ X_i {+} \sum_{j=1}^{q{+}1} X_{ij} | 1\leq i\leq m \},\\
\{X_i {+} X_{ij} | {1\leq i\leq m,1\leq j\leq q{+}1}\},\\
\{X_{ixj}| 1\leq i <m, i<x\leq m, 1\leq j\leq q{+}1\}\bigr\} 
\end{IEEEeqnarray*}
By performing the operation $\sum_{j=1}^{q{+}1} (X_i {+} X_{ij})$ for $1\leq i \leq m$, we get $(q{+}1)X_i + \sum_{j=1}^{q{+}1} X_{ij}$. Now, by noting that $(q{+}1)X_i + \sum_{j=1}^{q{+}1} X_{ij} - (X_i {+} \sum_{j=1}^{q{+}1} X_{ij}) = qX_i$, we get,
\begin{IEEEeqnarray*}{l}
\{Y_{e_{ij}} | 1\leq i\leq m, 1\leq j\leq q{+}1\}\\
\rightarrow\\
\bigl\{\{ qX_i | 1\leq i\leq m \},\\
\{X_i {+} X_{ij} | {1\leq i\leq m,1\leq j\leq q{+}1}\},\\
\{X_{ixj}| 1\leq i <m, i<x\leq m, 1\leq j\leq q{+}1\} \bigr\}
\end{IEEEeqnarray*}
Since, the characteristic of the finite field does not divide $q$, there exist an inverse of $q$. So,
\begin{IEEEeqnarray*}{l}
\{Y_{e_{ij}} | 1\leq i\leq m, 1\leq j\leq q{+}1\}\\
\rightarrow\\
\bigl\{\{ X_i | 1\leq i\leq m \},\\
\{X_i {+} X_{ij} | {1\leq i\leq m,1\leq j\leq q{+}1}\},\\
\{X_{ixj}| 1\leq i <m, i<x\leq m, 1\leq j\leq q{+}1\}\bigr\}  \\[12pt] 
or,\{Y_{e_{ij}} | 1\leq i\leq m, 1\leq j\leq q{+}1\}\\
\rightarrow\\
\bigl\{\{ X_i | 1\leq i\leq m \},\\
\{X_{ij} | {1\leq i\leq m,1\leq j\leq q{+}1}\},\\
\{X_{ixj}| 1\leq i <m, i<x\leq m, 1\leq j\leq q{+}1\}\bigr\}  
\end{IEEEeqnarray*}
\begin{IEEEeqnarray*}{l}
lm(q+1) \geq rm + rm(q+1) +  \frac{rm(m-1)(q+1)}{2}\\
or,\, l(q+1) \geq r + r(q+1) +  \frac{r(m-1)(q+1)}{2}\\
or,\, l(q+1) \geq r( 1 + (q+1) +  \frac{(m-1)(q+1)}{2})\\
or,\, \frac{r}{l} \leq \frac{q+1}{1 + (q+1) +  \frac{(m-1)(q+1)}{2}}\\
or,\, \frac{r}{l} \leq \frac{1}{\frac{1}{q+1} + 1 +  \frac{(m-1)}{2}}\\
or,\, \frac{r}{l} \leq \frac{2}{\frac{2}{q+1} + 2 +  (m-1)}\\
or,\, \frac{r}{l} \leq \frac{2}{\frac{2}{q+1} + 1 + m}
\end{IEEEeqnarray*}
Since, $\frac{2}{q+1} > 0$, it follows that $\frac{2}{\frac{2}{q+1} + 1 + m} <\frac{2}{1 + m}$.
\end{IEEEproof}

\begin{theorem}\label{thm3}
For any non-zero positive rational number $\frac{k}{n}$, and for any given set of primes $\{p_1,p_2,\ldots,p_w\}$,
there exist a sum-network which has capacity equal to $\frac{k}{n}$, and has a capacity achieving rate $\frac{k}{n}$ fractional linear network coding solution if and only if the characteristic of the finite field belongs to the given set.
\end{theorem}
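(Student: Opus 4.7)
The plan is to apply the characteristic-selection mechanism already developed in Claims~\ref{cla1} and \ref{cla2}, suitably generalised so that the rate at which the characteristic dependence bites equals the prescribed $k/n$ rather than $2/(m+1)$. My first step is to set $q=p_1 p_2 \cdots p_w$, so that for any prime $p$ one has $p\mid q$ if and only if $p\in \{p_1,\ldots,p_w\}$. This reduces the theorem to exhibiting a sum-network whose capacity is exactly $k/n$ and whose linear coding capacity drops strictly below $k/n$ whenever the field characteristic fails to divide $q$.

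Next I would generalise the construction of $\mathcal{N}_1$ by introducing additional tunable integer parameters. The cut-counting argument of Claim~\ref{cla1} yielded $\tfrac{r}{l}\leq \tfrac{E}{M_2+M_3}$, where $E$, $M_2$, and $M_3$ count, respectively, the middle edges, the $S_2$-type sources (each feeding exactly one middle edge), and the $S_3$-type sources (each feeding exactly two middle edges). By replacing the fixed multiplicities $m$ and $q+1$ with free integers (for instance by prepending a short aggregator network that groups $k$ independent source blocks, by duplicating the middle-edge layer $n$ times, or by enlarging $M_3$ via additional pair-connected sources), I can arrange $E/(M_2+M_3+\cdots)=k/n$ while preserving the algebraic coupling that makes achievability hinge on the value of $(q+1)\bmod p$. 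I would then write down an explicit $(k,n)$ fractional linear code analogous to the $(2,m+1)$ code of Claim~\ref{cla1}: the first few time slots carry the $Y_{ij}'$ sums along the middle edges and use $(q+1)=1$ in the field to telescope the copies at each $t_i\in T_1$, and the remaining slots ship the components of the $S_3$-type messages needed at each $t_{iyj}\in T_3$.

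For the converse, I would copy the argument of Claim~\ref{cla2}: over a field where $p\nmid q$ the element $q$ is invertible, so multiplying the $t_i$-equations by $q+1$ and subtracting the corresponding $t_{ij}$-equations yields $qX_i$ and hence each $X_i$ individually. This produces $|S_1|$ extra independent constraints on the middle-edge messages, tightening the cut inequality to $\tfrac{r}{l}\leq \tfrac{E}{|S_1|+M_2+M_3+\cdots}$, which by the same arithmetic as in Claim~\ref{cla2} lies strictly below $k/n$. The functional (non-linear) version of the same $\{A\}\to\{B\}$ argument, using only the $T_3$-terminals, caps the capacity at $k/n$, and the linear construction over any field of characteristic in the given set matches it, so the capacity is exactly $k/n$.

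The main obstacle I anticipate is ensuring that the extra structure introduced to tune the capacity from $2/(m+1)$ to $k/n$ does not inadvertently open a rate-$k/n$ linear route that works over \emph{every} characteristic, which is precisely the failure mode illustrated in the introduction by $\mathcal{N}_a \subset \mathcal{N}_b$. Avoiding this requires that every newly added source be linked to its terminals only through the same $\mathcal{N}_1$-style middle-edge bottleneck, with the direct-edge lists of items (8)–(10) extended verbatim so that no cut is weakened. Once this is arranged, any purported rate-$k/n$ linear code is forced through the same $(q+1)$-inversion step, and the equivalence between $p\in\{p_1,\ldots,p_w\}$ and $p\mid q$ closes the proof.
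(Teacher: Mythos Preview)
Your plan shares the right first step with the paper---setting $q=p_1\cdots p_w$ and leveraging Claims~\ref{cla1} and~\ref{cla2}---but the core construction you sketch is too vague to close, and in one respect points the wrong way. The paper does \emph{not} re-engineer the internal multiplicities of $\mathcal{N}_1$ to hit $k/n$ directly. It sets $m=2n-1$, so that $\mathcal{N}_1$ itself has capacity $2/(m{+}1)=1/n$, then takes $k$ \emph{disjoint} copies $\mathcal{N}_{11},\ldots,\mathcal{N}_{1k}$ and merges only the source nodes and terminal nodes bearing identical labels, obtaining a network $\mathcal{N}_1'$. The cut argument of Claim~\ref{cla1} now reads $rm+klm(q{+}1)\geq rm+rm(q{+}1)+\tfrac{rm(m-1)(q+1)}{2}$, giving $r/l\leq 2k/(m{+}1)=k/n$; achievability is immediate by routing the $(2x{-}1)$-th and $(2x)$-th components of every source through the $x$-th copy via the $(2,m{+}1)$ code of Claim~\ref{cla1}. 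Note in particular that $q{+}1$ is \emph{not} a free parameter you can tune---$q$ is already pinned down by the prime set---so your suggestion to ``replace the fixed multiplicities $m$ and $q{+}1$ with free integers'' cannot work as stated.

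The ingredient you are genuinely missing for the converse is the reduction step (Lemma~\ref{lem1}): any $(r,l)$ fractional linear solution of $\mathcal{N}_1'$ yields an $(r,kl)$ solution of $\mathcal{N}_1$, because the $k$ parallel copies of each edge in $\mathcal{N}_1'$ can be simulated by $k$ uses of the single edge in $\mathcal{N}_1$. This one-line reduction transports Claim~\ref{cla2} to $\mathcal{N}_1'$ wholesale, with no need to redo the entropy count for a new network. It also neutralises the $\mathcal{N}_a\subset\mathcal{N}_b$ failure mode you rightly flag: since the $k$ copies are disjoint except at sources and terminals, no new cross-paths are created, and any purported rate-$k/n$ linear code on $\mathcal{N}_1'$ collapses to a rate-$1/n$ code on $\mathcal{N}_1$, which Claim~\ref{cla2} already forbids when $p\nmid q$. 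Your alternative---enlarging $M_3$, adding aggregator gadgets, or otherwise modifying $\mathcal{N}_1$ in place---would oblige you to re-establish both claims from scratch for the altered network, with no guarantee that the $(q{+}1)$-inversion mechanism survives the surgery.
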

\begin{IEEEproof}
Consider the sum-network $\mathcal{N}_1$ shown in Fig.~\ref{sumx} with $q = p_1\times p_2\times\cdots\times p_w$ and $m=2n-1$. Now make $k$ disjoint copies of $\mathcal{N}_1$ and name the $i^{\text{th}}$ copy as $\mathcal{N}_{1i}$. Also let us name the edge $e_{ij}$ in $\mathcal{N}_{1x}$ as $e_{ijx}$ where $1\leq i\leq m, 1\leq j\leq q+1$ and $1\leq x\leq k$. Note that, there are now $k$ copies of any source $s_i$, and $k$ copies of any terminal $t_i$. Combine all sources and terminals designated by the same label into a single source and a single terminal respectively;	 \textit{i.e.} all $k$ copies of the source $s_i$ are combined into one source denoted by the same label $s_i$; and all $k$ copies of the terminal $t_i$ are combined into one terminal and denoted by $t_i$. Let us name the resulted sum-network as $\mathcal{N}_1^\prime$.

Now, going through a proof similar to that of Claim~\ref{cla1}, it can be seen that instead if equation $(\ref{mq1})$ the following equation holds true:
\begin{IEEEeqnarray*}{l}
\bigl\{\{X_i | 1{\leq} i{\leq} m \}, \{Y_{e_{ijx}} | 1{\leq} i{\leq} m, 1\leq j\leq q{+}1, 1\leq x\leq k\}\bigr\}\\
\rightarrow\\
\bigl\{\{X_i | 1\leq i\leq m \}, \{X_{ij} | 1\leq i\leq m, 1\leq j\leq q{+}1 \},\\ \{X_{ixj}| 1\leq i <m, i<x\leq m, 1\leq j\leq q{+}1\}\bigr\} 
\end{IEEEeqnarray*}
From this equation, proceeding as in Claim~\ref{cla1}, it can be seen that
\begin{equation}
\frac{r}{l} \leq \frac{2k}{1 + m}
\end{equation}
Hence, the sum-network has capacity equal to $\frac{k}{n}$ when $n=2m-1$. Now we show that the sum-network has a $(2k,1+m)$ fractional linear network coding solution if the characteristic of the finite field divides $q$. Note that, since $q = p_1\times p_2\times\cdots\times p_w$, the characteristic divides $q$ if and only if it belongs to the set $\{p_1,p_2,\ldots,p_w\}$. Say the symbol generated at the source $s_i$ is represented as $X_i = (X_{i1}, X_{i2},\ldots,X_{i(2k)}$. Also let us represent the sum $\sum_{s_i\in S} X_i$ by $X$. Hence, $X = (\sum_{s_i\in S} X_{i1}, \sum_{s_i\in S} X_{i2}, \ldots , \sum_{s_i\in S} X_{i(2k)})$ where $S$ is the set of all sources. As shown in Claim~\ref{cla1}, since the sum-network $\mathcal{N}_{1x}$ has a $(2,m+1)$ fractional linear network coding solution if the characteristic divides $q$, the terminal $t_i$ can receive the information $\sum_{s_j\in S} X_{j(2x)}$ and $\sum_{s_j\in S} X_{j(2x-1)}$ from the $x^{\text{th}}$ copy \textit{i.e.} $\mathcal{N}_{1x}$ where $1\leq x\leq k$. This forms the required fractional linear network coding solution.

We now show that the linear coding capacity of $\mathcal{N}_1^\prime$ is strictly less than $\frac{2k}{1 + m}$ if the characteristic of the finite field does not divides $q$. Note that the characteristic of the finite field does not divides $q$ if and only if the characteristic does not belong to the set $\{p_1,p_2,\ldots,p_w\}$. Consider the following lemma which is a reproduction of Lemma~6 given in \cite{das2}, but here for a different network.
\begin{lemma}\label{lem1}
If the sum-network $\mathcal{N}_1^\prime$ has an $(r,l)$ fractional linear network coding solution then the sum-network $\mathcal{N}_1$ has an $(r,lk)$ fractional linear network coding solution.
\end{lemma}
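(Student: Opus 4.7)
The plan is to emulate the assumed $(r,l)$ fractional linear solution of $\mathcal{N}_1^\prime$ inside a single copy of $\mathcal{N}_1$ by letting each edge of $\mathcal{N}_1$ carry $k$ parallel length-$l$ blocks, one for each of the original replicas $\mathcal{N}_{11},\dots,\mathcal{N}_{1k}$. Since $\mathcal{N}_1^\prime$ is obtained from those replicas by identifying sources and terminals that share a label, every intermediate node and every edge of $\mathcal{N}_1^\prime$ sits in exactly one of $k$ families indexed by $x\in\{1,\dots,k\}$. For each edge $e$ of $\mathcal{N}_1$, let $e_x$ denote its copy in the $x$-th family of $\mathcal{N}_1^\prime$, and let $A_{\cdot,\cdot}$ denote the coefficient matrices of the assumed solution.

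First I would define the candidate $(r,lk)$ solution on $\mathcal{N}_1$. Each edge would carry an $lk$-length vector regarded as the vertical concatenation of $k$ length-$l$ blocks. For an edge $e$ leaving a source $s$, the encoding matrix would be the $lk\times r$ matrix obtained by stacking $A_{s,e_1},\dots,A_{s,e_k}$ vertically, so that the $x$-th block is exactly what $s$ transmits on $e_x$ in $\mathcal{N}_1^\prime$. For an edge $e$ leaving an intermediate node with $\mathcal{N}_1$-incoming edges $f_1,\dots,f_M$, the per-incoming coefficient on $f_i$ would be the block-diagonal matrix whose $x$-th diagonal block is the matrix that the $x$-th-family node in $\mathcal{N}_1^\prime$ uses on the $x$-th copy of $f_i$; that way the $x$-th output block of $e$ equals what the $x$-th replica of the node would send on $e_x$. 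At each terminal $t$, I would read off the $x$-th block of each incoming edge as the symbol that the corresponding $x$-th-family edge delivers to $t$ in $\mathcal{N}_1^\prime$, and apply the $\mathcal{N}_1^\prime$-decoder of $t$ unchanged.

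Next, I would verify correctness by induction on a topological order of the edges of $\mathcal{N}_1$. The invariant to maintain is that for every edge $e$ of $\mathcal{N}_1$, the $x$-th length-$l$ block of its message equals the message carried on $e_x$ in $\mathcal{N}_1^\prime$ under the assumed solution, evaluated at the same source assignment (which is legitimate because sources of $\mathcal{N}_1$ and $\mathcal{N}_1^\prime$ are identified, not duplicated). The base case follows immediately from the stacked source encodings, and the inductive step follows from linearity together with the block-diagonal structure of the intermediate-node coefficients, which keeps the $k$ blocks decoupled. Once the invariant reaches each terminal $t$, the collection of blocks it reads off its incoming $\mathcal{N}_1$-edges is exactly the tuple of incoming symbols $t$ sees in $\mathcal{N}_1^\prime$; applying the same decoder therefore outputs the same $r$-length vector, namely the demanded sum $\sum_i X_i$.

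I do not expect a serious obstacle: the argument is a standard linearity-based parallelisation, and its whole content lies in the indexing that splits the length-$lk$ symbol on each edge of $\mathcal{N}_1$ into $k$ independent length-$l$ blocks evolving exactly as in their own replicas. The one point requiring care is the observation that only intermediate nodes and edges are replicated when forming $\mathcal{N}_1^\prime$, while sources and terminals are shared, so the source messages and the demanded sum in $\mathcal{N}_1$ and $\mathcal{N}_1^\prime$ literally coincide; this is precisely what makes the blockwise stacking of the $k$ replica solutions a valid $(r,lk)$ linear solution on $\mathcal{N}_1$.
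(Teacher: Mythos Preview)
Your proposal is correct and takes essentially the same approach as the paper: simulate the $k$ copies of each edge in $\mathcal{N}_1^\prime$ by letting the single corresponding edge of $\mathcal{N}_1$ carry $k$ stacked length-$l$ blocks. The paper states this as a two-line sketch (each edge of $\mathcal{N}_1$ has $k$ copies in $\mathcal{N}_1^\prime$, so $k$ usages of $\mathcal{N}_1$ suffice to carry the same information), while you make the stacking of source encoders, the block-diagonal intermediate coefficients, and the inductive verification explicit.
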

\begin{proof}
For any edge $e$ in $\mathcal{N}_1$ there exists $k$ copies of $e$ in $\mathcal{N}_1^\prime$. So the information that can be transmitted though the $k$ copies of $e$ in one usage of $\mathcal{N}_1^\prime$, can be transmitted though $e$ in $k$ usage of $\mathcal{N}_1$. So by using the sum-network $\mathcal{N}_1$ for $k$ times, the $(r,lk)$ fractional linear network coding solution can be achieved.
\end{proof}
Let us assume that $\mathcal{N}_1^\prime$ has a rate $\frac{2k}{1+m}$ achieving fractional linear network coding solution when the characteristic of the finite field does not divide $q$. In which case as per Lemma~\ref{lem1}, the sum-network has a rate $\frac{2k}{k(1+m)} = \frac{2}{1+m}$ achieving fractional linear network coding solution. However this is in contradiction with Claim~\ref{cla2}. Hence the sum-network $\mathcal{N}_1^\prime$ is the required sum-network.
\end{IEEEproof}

\section{A Sum-network (of capacity $k/n$) which has a $(k,n)$ fractional linear network coding solution only over a field of characteristic from a given co-finite set of primes}\label{sec3}

\begin{figure*}
\centering
\includegraphics[width=\textwidth]{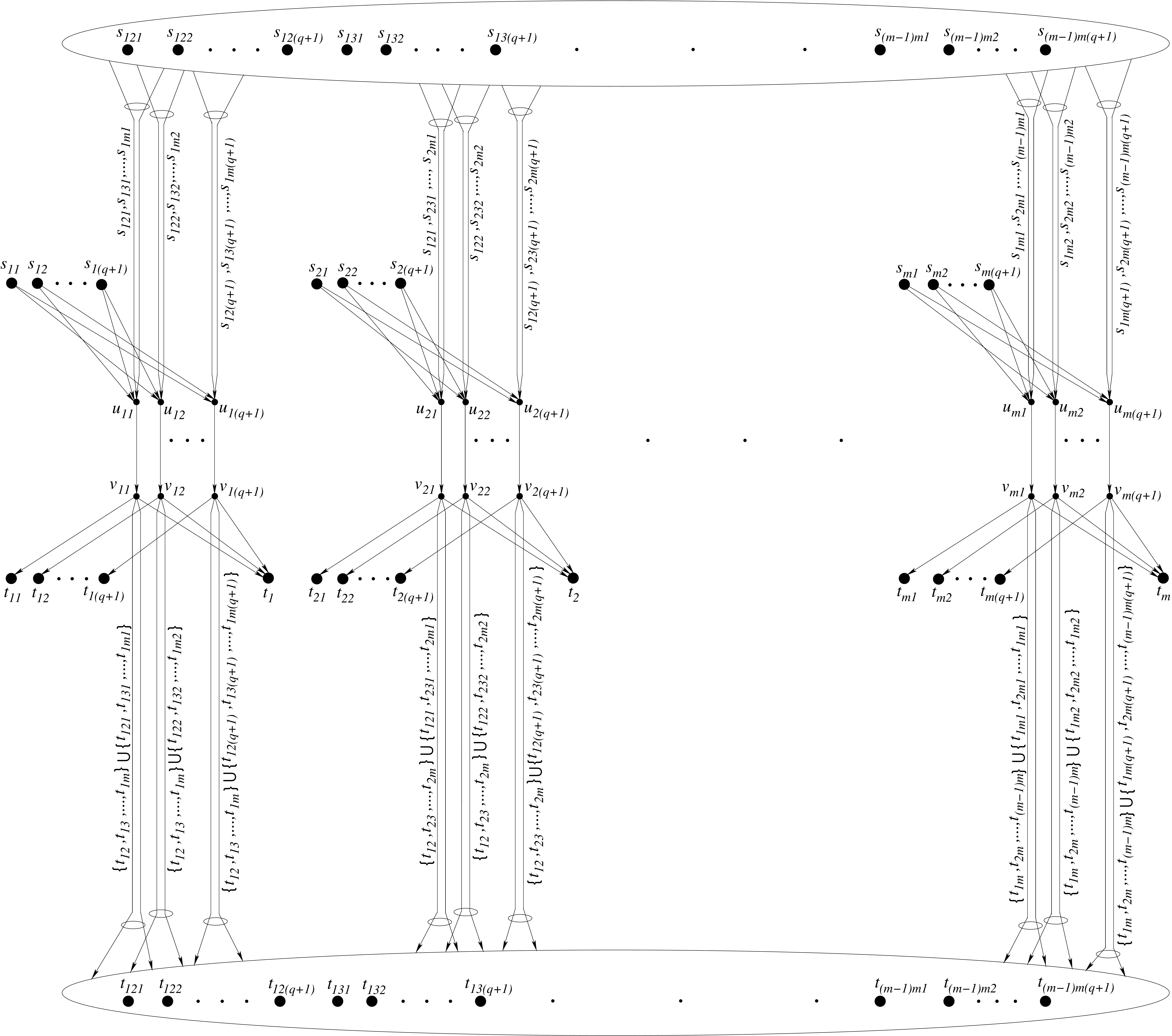}
\caption{A sum-network $\mathcal{N}_2$ whose capacity is $\frac{2}{m+1}$. $\mathcal{N}_2$ has a rate $\frac{2}{m+1}$ achieving fractional linear network coding solution if and only if the characteristic of the finite field does not divide $q$.}
\label{nonsum}
\end{figure*}

In this section we show that for any positive rational number $k/n$ and any set of prime number $\{p_1,p_2,\ldots,p_w\}$ there exists a sum-network of capacity $\frac{k}{n}$, which has a rate $\frac{k}{n}$ achieving fractional linear network coding solution if and only if the characteristic of the finite field does not belong to the given set of primes. 

For this purpose, we consider the sum-network $\mathcal{N}_2$ shown in Fig.~\ref{nonsum}. The sum-network has $m(q+1) + \binom{m}{2}(q+1)$ sources and $m + m(q+1) + \binom{m}{2}(q+1) + \binom{m}{2}$ many terminals. The set of sources are partitioned into two sets: $S_1 = \{s_{ij} | 1\leq i\leq m, 1\leq j \leq (q+1) \}$ and $S_2 = \{ s_{ixj} | 1\leq i < m, i < x \leq m, 1\leq j\leq (q+1) \}$. The set of terminals are partitioned into four sets: $T_1 = \{t_i | 1\leq i\leq m \}$, $T_2 = \{t_{ij} | 1\leq i\leq m, 1\leq j \leq (q+1) \}$, $T_3 = \{ t_{ijx} | 1\leq i < m, i < j \leq m, 1\leq x\leq (q+1) \}$ and $T_4 = \{t^\prime_{ij} | 1\leq i,j\leq m, i < j \}$. Apart from the sources and terminals, there are $2(m)(q+1)$ intermediate nodes. These nodes are $\{u_{ij} | 1\leq i \leq m, 1\leq j\leq (q+1) \}$ and $\{v_{ij} | 1\leq i \leq m, 1\leq j\leq (q+1) \}$. The set of edges in the sum-network are as follows.
\begin{enumerate}
\item $e_{ij} = (u_{ij},v_{ij})$ for $1\leq i\leq m$ and $1\leq j\leq (q+1)$.
\item $(s_{ij},u_{ix})$ for $1\leq i\leq m$, $1\leq j,x\leq (q+1)$ and $x\neq j$.
\item $(s_{ijx},u_{iy})$ and $(s_{ijx},u_{jy})$ for $1\leq i < m, i< j\leq m$, $1\leq x,y\leq (q+1)$ and $y\neq x$.
\item $(v_{ij},t_i)$ for $1\leq i\leq m$ and $1\leq j\leq (q+1)$.
\item $(v_{ij},t_{ij})$ for $1\leq i\leq m$ and $1\leq j\leq (q+1)$.
\item $(v_{ix},t_{ijx})$ and $(v_{jx},t_{ijx})$ for $1\leq i < m, i< j\leq m$ and $1\leq x\leq (q+1)$.
\item $(v_{ij},t^\prime_{xi})$ and $(v_{ij},t^\prime_{iy})$ for $1\leq i,x,y\leq m, x < i, y > i$ and $1\leq j\leq (q+1)$.
\end{enumerate}
The middle edges in the network are $e_{ij}$ for $1\leq i\leq m$ and $1\leq j\leq (q+1)$. As shown, the source $s_{ij}$ has an edge connected to each of the middle edges $e_{ik}$ for $1\leq k\leq (q+1)$ except the edge $e_{ij}$. And the source $s_{ixj}$ for $1\leq i < m, i < x \leq m$ and $1\leq j\leq (q+1)$ are connected to all of the middle edges in the set $\{e_{by} | b \in \{ i,x\}, 1\leq y\leq (q+1) \}$ except the edges $e_{ij}$ and $e_{xj}$. The terminal $t_{i}$ is connected to all of the middle edges $\{e_{ij}|1\leq j\leq (q+1)\}$. For $1\leq i\leq m$ and $1\leq j\leq (q+1)$, the terminal $t_{ij}$ has an edge from the middle edge $e_{ij}$. The terminal $t_{ixj}$ has edges from the head nodes of $e_{ij}$ and $e_{xj}$. For, $1\leq i,x\leq m$ and $i<x$, terminal $t_{ix}^\prime$ has an edge from all of the middle edges in the set $\{e_{yj} | y \in \{i,x\}, 1\leq j\leq (q+1)\}$. 

Let the set of sources which has a path to $tail(e_{ij})$ for $1\leq i\leq m$ and $1\leq j\leq (q+1)$ is denoted by $S_{ij}$. It can be seen that for $1\leq i\leq m$ and $1\leq j\leq q+1$,
\begin{IEEEeqnarray*}{l}
S_{ij} = \{ s_{ix} | 1\leq x \leq (q+1), x\neq j\}\\ \hfill \cup\> \{ s_{xiy} | 1\leq x < i, 1\leq y \leq (q+1), y\neq j \}\\ \hfill \cup\> \{ s_{ixy} | i < x \leq m, 1\leq y\leq (q+1), y\neq j \}
\end{IEEEeqnarray*}
Also let $S = S_1 \cup S_2$. We now list the direct edges. Note that these direct edges have not been shown in Fig.~\ref{nonsum}.
\begin{enumerate}
\setcounter{enumi}{7}
\item $(s,t_i)$ for $1\leq i\leq m$, $s\in S$ and $s \notin \cup_{j=1}^{q{+}1} S_{ij}$.
\item $(s,t_{ij})$ for $1\leq i\leq m$, $1\leq j\leq q+1$, $s\in S$ and $s \notin S_{ij}$.
\item $(s,t_{ijx})$ for $1\leq i,j\leq m$, $i < j$, $1\leq x\leq q+1$, $s\in S$ and $s \notin \{ S_{ix} \cup S_{jx}\}$.
\item $(s,t^\prime_{ij})$ for $1\leq i,j \leq m$, $i < j$, $s\in S$, $s \notin \cup_{x=1}^{q{+}1} S_{ix}$  and $s \notin \cup_{x=1}^{q{+}1} S_{jx}$. 
\end{enumerate}

\begin{claim}\label{cla3}
The sum-network in Fig.~\ref{nonsum} has capacity equal to $\frac{2}{m+1}$; and has a rate $\frac{2}{m+1}$ fractional linear network coding solution if the characteristic of the finite field does not divide $q$.
\end{claim}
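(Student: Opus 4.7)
The claim has two halves that I plan to address separately: an alphabet-independent capacity upper bound $r/l \le 2/(m+1)$, and a rate-$2/(m+1)$ fractional linear achievability whenever $\mathrm{char}(\mathbb{F}) \nmid q$. Both follow the blueprint of Claim~\ref{cla1}, but the essential new ingredient for $\mathcal{N}_2$ is the family of terminals $t'_{ij} \in T_4$, which is absent from $\mathcal{N}_1$.

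For the upper bound, the plan is to chain functional dependencies forced by the various terminals to show
\[
\{Y_{e_{ij}} : 1 \le i \le m,\ 1 \le j \le q{+}1\} \;\to\; S,
\]
and then compare cardinalities $|\mathcal{A}|^{lm(q+1)} \ge |\mathcal{A}|^{r|S|}$ with $|S| = (q{+}1)\,m(m{+}1)/2$ to conclude $r/l \le 2/(m{+}1)$. Writing $Z_{ic} := X_{ic} + \sum_{x<i} X_{xic} + \sum_{x>i} X_{ixc}$ and $A_i := \bigcup_k S_{ik}$, I intend to carry out the following extractions in order: (i) from $t_{ij}$, $Y_{e_{ij}} \to \sum_{c \neq j} Z_{ic}$; (ii) from $t_i$, $\{Y_{e_{ij}}\}_j \to \sum_c Z_{ic}$; (iii) subtracting (i) from (ii) yields $Z_{ij}$ for every $(i,j)$; (iv) the identity $S_{ix} \cap S_{yx} = \{s_{iyc}: c \ne x\}$ (for $i<y$), combined with $t_{ix}, t_{yx}, t_{iyx}$, gives $\sum_{c \ne x} X_{iyc}$; (v) the identity $A_i \cap A_j = \{s_{ijc} : 1 \le c \le q{+}1\}$ (for $i<j$), combined with $t_i, t_j, t'_{ij}$, gives $T_{ij} := \sum_c X_{ijc}$ using only abelian-group subtraction; (vi) combining (iv) and (v) via $X_{iyc} = T_{iy} - \sum_{c' \ne c} X_{iyc'}$ recovers every source in $S_2$; (vii) substituting these back into (iii) recovers every source in $S_1$.

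For the achievability with $p \nmid q$, I plan to exhibit an explicit $(2, m{+}1)$ linear code. Each edge $e_{ij}$'s $m{+}1$ payload scalars are split into a phase~1 of two scalars carrying $Y'_{ij} = \sum_{s \in S_{ij}} X_s$, and a phase~2 of $m{-}1$ scalars carrying $\sum_{c \neq j} X_{iyca}$ for each $y > i$ and $\sum_{c \neq j} X_{xicb}$ for each $x < i$. Each terminal then adds its direct-edge symbols to a target sum assembled from its middle-edge inputs: $t_{ij}$ takes $Y'_{ij}$ directly; $t_{iyx}$ takes $Y'_{ix} + Y'_{yx}$ and subtracts $\sum_{c \neq x} X_{iyc}$ reconstructed from the two matching phase-2 scalars; $t_i$ takes $q^{-1}\sum_j Y'_{ij} = \sum_c Z_{ic}$; and $t'_{ij}$ applies $q^{-1}$ separately to the phase-1 sums and to the phase-2 sums to obtain $\sum_c Z_{ic}$, $\sum_c Z_{jc}$, and $T_{ij}$, then combines by inclusion-exclusion. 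The hypothesis $p \nmid q$ enters only at these $q^{-1}$ steps.

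The hardest step is step~(v) of the upper bound. In the analogous proof for $\mathcal{N}_1$ one can close the dependency chain by supplying the root sources $\{X_i\}_i$ as extra information, but $\mathcal{N}_2$ has no root sources to draw on. The family $T_4$ has been introduced precisely for this role: the two-set inclusion-exclusion $\sum_{s \in A_i} X_s + \sum_{s \in A_j} X_s - \sum_{s \in A_i \cup A_j} X_s = \sum_{s \in A_i \cap A_j} X_s$ lets us pull $T_{ij}$ out using only group subtraction, sidestepping the $q$-invertibility trick that $\mathcal{N}_1$ would require at this stage. Once this is in hand, the remaining steps (iv), (vi), (vii) collapse into routine subtractions and the cardinality comparison is immediate.
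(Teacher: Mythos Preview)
Your proposal is correct and follows essentially the same approach as the paper: the upper bound is obtained by the same chain of subtractions (terminals $t_{ij}$, $t_i$, $t_{iyx}$, and crucially $t'_{ij}$ via the inclusion--exclusion $Y_i+Y_j-Z_{ij}=\sum_c X_{ijc}$) leading to full source recovery from the middle-edge symbols, and the $(2,m{+}1)$ code you describe is exactly the paper's, including the phase-1/phase-2 payload split and the use of $q^{-1}$ at $t_i$ and $t'_{ij}$. The only cosmetic difference is ordering: you isolate each $Z_{ij}$ first and peel off the $S_2$-part at the end, whereas the paper removes the $S_2$-terms from $Y_i$ and $Y'_{ij}$ first and then subtracts to obtain each $X_{ij}$.
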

\begin{IEEEproof}
Consider an $(r,l)$ fractional linear network coding solution of the sum-network. We first list which terminal has to compute which information from which set of edges for the sum-network to have such a solution.
\begin{enumerate}
\item For $1\leq i\leq m$, from the edge set $\{e_{ij} | 1\leq j \leq (q+1) \}$ the terminal $t_{i}$ must compute the sum:
\begin{IEEEeqnarray}{l}
Y_i = \sum_{x=1}^{q+1} X_{ix} + \sum_{x=1}^{i-1} \sum_{y=1}^{q+1} X_{xiy} + \sum_{x=i+1}^{m} \sum_{y=1}^{q+1} X_{ixy} \label{dq1}
\end{IEEEeqnarray}
\item For $1\leq i\leq m$ and $1\leq j\leq (q+1)$, from the edge $e_{ij}$ the terminal $t_{ij}$ must compute:
\begin{IEEEeqnarray}{l}
Y_{ij}^\prime = \sum_{x{=}1,x{\neq} j}^{q+1} X_{ix} {+} \sum_{x{=}1}^{i{-}1}\, \sum_{y{=}1,y{\neq}j}^{q{+}1} X_{xiy} \IEEEnonumber\\ \hfill +\> \sum_{x=i{+}1}^{m} \, \sum_{y{=}1,y{\neq} j}^{q+1} X_{ixy}\IEEEeqnarraynumspace \label{dq2}
\end{IEEEeqnarray}
\item For $1\leq i,x \leq m, i < x, 1\leq j\leq (q{+}1)$, from the two edges $e_{ij}$ and $e_{xj}$, the terminal $t_{ixj}$ must compute:
\begin{IEEEeqnarray}{l}
Y_{ixj} = \sum_{y=1,y\neq j}^{q+1} X_{iy} + \sum_{y=1,y\neq j}^{q+1} X_{xy} + \sum_{b=1}^{i-1} \sum_{y{=}1,y{\neq}j}^{q+1} X_{biy}\IEEEnonumber \\+\> \sum_{b=i+1,b\neq x}^{m} \sum_{y{=}1,y{\neq} j}^{q+1} X_{iby}  + \sum_{b=1,b\neq i}^{x-1} \sum_{y{=}1,y{\neq}j}^{q+1} X_{bxy}\IEEEnonumber \\+\> \sum_{b=x+1}^{m} \sum_{y{=}1,y{\neq} j}^{q+1} X_{xby} + \sum_{y{=}1,y{\neq}j}^{q+1} X_{ixy} \label{dq3}
\end{IEEEeqnarray}
\item For $1\leq i,j \leq m, i<j$, From the edges in the set $\{e_{ix} | 1\leq x\leq (q+1)\} \cup \{e_{jx} | 1\leq x\leq (q+1)\}$, the terminal $t^\prime_{ij}$ must compute:
\begin{IEEEeqnarray}{l}
Z_{ij} = \sum_{y=1}^{q+1} X_{iy} + \sum_{y=1}^{q+1} X_{jy} + \sum_{x=1}^{i-1} \sum_{y=1}^{q+1} X_{xiy} \IEEEnonumber\\+\> \sum_{x=i+1,x\neq j}^{m} \sum_{y=1}^{q+1} X_{ixy} \IEEEnonumber + \sum_{x=1,x\neq i}^{j-1} \sum_{y=1}^{q+1} X_{xjy} \\+\> \sum_{x=j+1}^{m} \sum_{y=1}^{q+1} X_{jxy} + \sum_{y=1}^{q+1} X_{ijy} \IEEEeqnarraynumspace\label{dq4}
\end{IEEEeqnarray}
\end{enumerate}
Now consider the following information that the terminals can also compute:
\begin{enumerate}
\item For $1\leq i,x\leq m, i < x, 1\leq j\leq (q+1)$, since $t_{ixj}$ is connected to edges $e_{ij}$ and $e_{xj}$ by the edges $(v_{ij},t_{ixj})$ and $(v_{xj},t_{ixj})$ respectively, according to equations (\ref{dq2}) and (\ref{dq3}), by doing the operation $Y_{ij}^\prime + Y_{xj}^\prime - Y_{ixj}$ terminal $t_{ixj}$ can compute:

\begin{equation}
\sum_{y=1,y\neq j}^{q+1} X_{ixy} \label{dq5}
\end{equation}
\item For $1\leq i,x\leq m, i<x$, since the information computed by the the terminals $t_i$ and $t_x$ can also be computed by the terminal $t_{ix}^\prime$, from equations (\ref{dq1}) and (\ref{dq4}), by doing the operation $Y_i + Y_x - Z_{ix}$ the terminal $t_{ix}^\prime$ can compute:
\begin{IEEEeqnarray}{l}
\sum_{y=1}^{q+1} X_{ixy} \label{dq6}
\end{IEEEeqnarray}
\end{enumerate}
Then, for $1\leq i,x\leq m$ and $i < x$, for each value of $j$ in the range $1$ to $(q+1)$, by doing the operation
\begin{equation*}
\sum_{y=1}^{q+1} X_{ixy} - \sum_{y=1,y\neq j}^{q+1} X_{ixy} \qquad \text{ (from eq. (\ref{dq5}) and eq. (\ref{dq6}))}
\end{equation*}
the terminal $t_{ix}^\prime$ computes the information $X_{ixj}$.

Hence we have,
\begin{IEEEeqnarray*}{l}
\{Y_{e_{ij}} | 1\leq i\leq m, 1\leq j\leq q{+}1\}\\
\rightarrow\\
\bigr\{\{ X_{ixy} | 1\leq i,x\leq m, i < x, 1\leq y\leq (q+1) \},\\
\{\sum_{x=1}^{q+1} X_{ix} + \sum_{x=1}^{i-1} \sum_{y=1}^{q+1} X_{xiy} + \sum_{x=i+1}^{m} \sum_{y=1}^{q+1} X_{ixy} | 1\leq i\leq m\}\\\hfill \text{ (from eq. } (\ref{dq1})),\\
\{\sum_{x=1,x\neq j}^{q+1} X_{ix} + \sum_{x=1}^{i-1} \sum_{y{=}1,y{\neq}j}^{q+1} X_{xiy} \\\hfill +\>  \sum_{x=i+1}^{m} \sum_{y{=}1,y{\neq} j}^{q+1} X_{ixy} | 1\leq i\leq m, 1\leq j\leq (q+1) \} \\\hfill \text{ (from eq. } (\ref{dq2}))
\bigr\}\\[12pt]
or, \{Y_{e_{ij}} | 1\leq i\leq m, 1\leq j\leq q{+}1\}\\
\rightarrow\\
\bigl\{\{ X_{ixy} | 1\leq i,x\leq m, i < x, 1\leq y\leq (q+1) \},\\
\{ \sum_{j=1}^{q+1} X_{ij} | 1\leq i\leq m \}, \IEEEyesnumber \label{wq1}\\
\{ \sum_{y=1,y\neq j}^{q+1} X_{iy} | 1\leq i\leq m, 1\leq j\leq (q+1) \}\bigr\}  \IEEEyesnumber \label{wq2}
\\[12pt]
or, \{Y_{e_{ij}} | 1\leq i\leq m, 1\leq j\leq q{+}1\}\\
\rightarrow\\
\bigl\{\{ X_{ixy} | 1\leq i,x\leq m, i < x, 1\leq y\leq (q+1) \},\\
\{ X_{ij} | 1\leq i\leq m, 1\leq j\leq (q+1) \}\bigr\}  \IEEEyesnumber\label{wq3}
\end{IEEEeqnarray*}
Hence, we have,
\begin{IEEEeqnarray*}{l}
lm(q+1) \geq \frac{rm(m-1)(q+1)}{2} + rm(q+1)\\
or,\, l \geq \frac{r(m-1)}{2} + r\\
or,\, \frac{l}{r} \geq \frac{(m-1)}{2} + 1\\
or,\, \frac{l}{r} \geq \frac{m+1}{2}\\
or,\, \frac{r}{l} \leq \frac{2}{m+1}
\end{IEEEeqnarray*}

We now show that $\mathcal{N}_2$ has a $(2,m+1)$ fractional linear network coding solution if the characteristic of the finite field does not divide $q$. As before, the two symbols generated at the sources are differentiated by indicating the first with an added suffix $a$, and indicating the second with an added suffix $b$. In the first two time slots, all the edges $e_{ij}$ carries, for $1\leq i\leq m$ and $1\leq j\leq (q+1)$,
\begin{IEEEeqnarray*}{l}
Y_{ij}^\prime = \sum_{x=1,x\neq j}^{q+1} X_{ix} + \sum_{x=1}^{i-1} \sum_{y{=}1,y{\neq}j}^{q+1} X_{xiy} + \sum_{x=i+1}^{m} \sum_{y{=}1,y{\neq} j}^{q+1} X_{ixy}
\end{IEEEeqnarray*}
And all direct edges simply forwards the two symbols from the source it is connected to. 
In the next $m-1$ time slots, edge $e_{ij}$ for $1\leq i\leq m, 1\leq j\leq (q+1)$ carries the $m-1$ symbols contained in sets $\{ \sum_{y{=}1,y{\neq}j}^{(q+1)} X_{ixya}| i < x \leq m \}$ and $\{ \sum_{y{=}1,y{\neq}j}^{(q+1)} X_{xiyb}| 1\leq x < i\}$.

Terminal $t_{ij}\in T_2$, from edge $e_{ij}$ for $1\leq i\leq m$ and $1\leq j\leq (q+1)$, after the first two time slots, receives $Y_{e_{ij}}^\prime$ which is same as equation (\ref{dq2}). So, it can compute the sum.

Terminal $t_i$ for $1\leq i\leq m$ adds $\sum_{j=1}^{q+1} Y_{ij}^\prime$ and gets:
\begin{IEEEeqnarray*}{l}
\sum_{j=1}^{q+1} Y_{ij}^\prime = \sum_{j=1}^{q+1} qX_{ij} + \sum_{x=1}^{i-1} \sum_{j=1}^{q+1} qX_{xij} +  \sum_{x=i+1}^m \sum_{j=1}^{q+1} qX_{ixj} 
\end{IEEEeqnarray*}
Now, since the characteristic of the finite field does not divides $q$, $q$ has an inverse, and hence terminal $t_i$ can compute $Y_i$ shown in equation (\ref{dq1}).

Terminal $t_{ixj}$ for $1\leq i,x\leq m, i < x, 1\leq j\leq q+1$, receives $\sum_{y=1,y\neq j}^{(q+1)} X_{ixya}$ and $\sum_{y=1,y\neq j}^{(q+1)}X_{ixjb}$ from the edges $e_{ij}$ and $e_{xj}$ respectively. Hence, by concatenation it computes the 2-length vector $\sum_{y=1,y\neq j}^{(q+1)} X_{ixy}$. Now by doing $Y_{ij}^\prime + Y_{xj}^\prime - \sum_{y=1,y\neq j}^{(q+1)} X_{ixj}$, terminal $t_{ixj}$ computes $Y_{ixj}$ as shown in equation (\ref{dq3}).

Now consider the terminal $t_{ix}^\prime$. As shown in the preceding paragraph, for $1\leq j\leq (q+1)$ the terminal $t_{ixj}$ can compute the sum $W_{ixj} = \sum_{y=1,y\neq j}^{(q+1)} X_{ixy}$. Then by doing the operation $\sum_{j=1}^{q+1} W_{ixj}$, the terminal $t_{ix}^\prime$ can compute,
\begin{IEEEeqnarray*}{l}
\sum_{j=1}^{q+1} W_{ixj} = \sum_{y=1}^{q+1} qX_{ixy} 
\end{IEEEeqnarray*}
Since $q$ has an inverse if the characteristic of the finite field does not divides $q$, terminal $t_{ixj}$ can compute the sum $\sum_{y=1}^{q+1} X_{ixy}$. Now, as already shown, terminal $t_i$ can compute $Y_i$ from $\{e_{ij} | 1\leq j\leq (q+1) \}$, and $t_x$ can compute $Y_x$ from $\{e_{xj} | 1\leq j\leq (q+1) \}$. Since the tail node of all of these edges are also connected to $t_{ix}^\prime$, by doing the operation $Y_i + Y_x - \sum_{y=1}^{q+1} X_{ixy}$, it can be seen that it computes the information $Z_{ix}$ shown in equation (\ref{dq4}).
\end{IEEEproof}

\begin{claim}\label{cla4}
The linear coding capacity of the sum-network $\mathcal{N}_2$ shown in Fig.~\ref{nonsum} is strictly less than $\frac{2}{m + 1}$ if the characteristic of the finite field divides $q$. 
\end{claim}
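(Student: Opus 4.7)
My plan is to sharpen the Claim~\ref{cla3} cut bound by exploiting a linear dependence on the edge messages that arises precisely when the characteristic divides $q$. Summing equation~(\ref{dq2}) over $j$ yields the polynomial identity $\sum_{j=1}^{q+1} Y_{ij}^\prime = qY_i$ in the source symbols, since each source in $\cup_j S_{ij}$ lies in exactly $q$ of the $q+1$ sets $S_{ij}$. When the characteristic divides $q$, this collapses to $\sum_j Y_{ij}^\prime \equiv 0$, and writing $Y_{ij}^\prime = A_{e_{ij},t_{ij}} Y_{e_{ij}}$ (the linear decoding used by $t_{ij}$) it translates into the identity $\sum_j A_{e_{ij},t_{ij}} Y_{e_{ij}} \equiv 0$, which every $(r,l)$ linear solution of $\mathcal{N}_2$ must satisfy.

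First I would recall that the chain of implications leading to equation~(\ref{wq3}) in Claim~\ref{cla3} is characteristic-free, so in any $(r,l)$ linear solution the source-to-edge linear map $\Phi$ is injective, giving $\dim \mathrm{Im}(\Phi) = r\bigl(m(q+1) + \binom{m}{2}(q+1)\bigr)$. Next I would define, for each $i \in \{1,\dots,m\}$, the linear functional $L_i(\{Y_{e_{ij}}\}_j) := \sum_{j=1}^{q+1} A_{e_{ij},t_{ij}} Y_{e_{ij}}$; since each $A_{e_{ij},t_{ij}}$ has full row rank $r$ (it decodes an $r$-symbol quantity from $Y_{e_{ij}}$), $L_i$ is surjective of rank $r$. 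The identity from the previous paragraph then says exactly $L_i \circ \Phi = 0$, i.e.\ $\mathrm{Im}(\Phi) \subseteq \ker L_i$ for every $i$.

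Because the $L_i$'s for different $i$ act on disjoint blocks of edge coordinates (the edges $\{e_{ij}\}_j$ are disjoint across $i$), they impose $m$ jointly independent rank-$r$ constraints, giving $\dim \bigcap_i \ker L_i = lm(q+1) - mr$. Combining this with the dimension count above yields
\begin{equation*}
r\bigl(m(q+1) + \tbinom{m}{2}(q+1)\bigr) \le lm(q+1) - mr,
\end{equation*}
and a short algebraic simplification (mirroring the final calculation at the end of Claim~\ref{cla2}) reduces this to $r/l \le 2/\bigl((m+1) + 2/(q+1)\bigr) < 2/(m+1)$. The main subtlety will be the joint independence of the constraints $L_i \circ \Phi = 0$ so that they remove $mr$ dimensions rather than fewer, but this is immediate from the disjoint-block structure of the $L_i$'s; a secondary point to check is that $\Phi$ is indeed injective for any linear solution, which is precisely what the characteristic-independent derivation of~(\ref{wq3}) provides.
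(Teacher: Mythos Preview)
Your argument is correct and lands on exactly the same inequality $r/l \le 2/\bigl((m{+}1)+2/(q{+}1)\bigr)$ that the paper obtains. The core insight is identical: when the characteristic divides $q$, the partial sums $Y_{ij}^\prime$ satisfy the linear dependence $\sum_{j=1}^{q+1} Y_{ij}^\prime = qY_i = 0$ (the paper writes the equivalent form $\sum_{j=1}^{q} Y_{ij}^\prime$ determines $Y_{i(q+1)}^\prime$), and this dependence costs $mr$ dimensions of edge capacity.

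Where you differ from the paper is in how you cash this in. The paper sets $A=\{Y_{e_{ij}}:j\le q\}$, $B=\{Y_{e_{i(q+1)}}\}$, $C=\{Y_{i(q+1)}^\prime\}$, proves $H(A)=H(A,C)$ from the dependence, and then runs an entropy chain $\sum H(Y_{e_{ij}}) - \sum_i H(Y_{i(q+1)}^\prime) \ge H(A,B) \ge$ (source entropy) via~(\ref{wq3}). You instead view the dependence as $m$ surjective linear maps $L_i$ on disjoint edge blocks whose common kernel has codimension $mr$, and combine this with the injectivity of the source-to-edge map $\Phi$ (again from~(\ref{wq3})) to get the dimension inequality directly. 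Because the claim concerns only \emph{linear} codes, your linear-algebraic packaging is a bit more transparent---the rank-$r$ of each $A_{e_{ij},t_{ij}}$ and the block-disjointness of the $L_i$ make the ``independent constraints'' step immediate---while the paper's entropy phrasing has the virtue of matching the style used elsewhere in the paper (e.g.\ Claim~\ref{cla2}). Both routes are equally rigorous once you note, as you do, that~(\ref{wq3}) is derived using only additions and subtractions and is therefore characteristic-free.
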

\begin{IEEEproof}
Let us consider the following notations.
\begin{IEEEeqnarray*}{l}
A = \{ Y_{e_{ij}} | 1\leq i\leq m, 1\leq j\leq q\}\\
B = \{ Y_{e_{i(q+1)}} | 1\leq i\leq m\}\\
C = \{ Y_{e_{i(q+1)}}^\prime | 1\leq i\leq m \}
\end{IEEEeqnarray*}
First we show that $H(A) = H(A,C)$ if the characteristic of the finite field divides $q$ in the following way. Note that for $1\leq i\leq m, 1\leq j\leq q$, from the edge $e_{ij}$, as shown in equation (\ref{dq2}), the information $Y_{ij}^\prime$ can be determined. Then, for $1\leq i\leq m$,
\begin{IEEEeqnarray*}{l}
\sum_{j=1}^{q} {Y_{ij}^\prime} =  \sum_{y=1}^q (q-1)X_{iy} + \sum_{x=1}^{i-1} \sum_{y=1}^q (q-1)X_{xiy} \\+\> \sum_{x=i+1}^{m} \sum_{y=1}^q (q-1)X_{ixy} + qX_{i(q+1)} + \sum_{x=1}^{i-1} qX_{xi(q+1)} \\+\> \sum_{x=i+1}^{m} qX_{xi(q+1)}\\
\text{Now, since $q = 0$ if the characteristic divides $q$, we have,}\\
\sum_{j=1}^{q} {Y_{ij}^\prime} =  \sum_{y=1}^q (q-1)X_{iy} + \sum_{x=1}^{i-1} \sum_{y=1}^q (q-1)X_{xiy} \\+\> \sum_{x=i+1}^{m} \sum_{y=1}^q (q-1)X_{ixy} \IEEEyesnumber \label{dq8}
\end{IEEEeqnarray*}
Since the characteristic cannot divide two consecutive elements of a field, there must exists an inverse of $(q-1)$; which is $(q-1)$ itself. Hence from equation (\ref{dq8}) we get,
\begin{IEEEeqnarray*}{l}
\sum_{j=1}^{q} (q-1){Y_{ij}^\prime} =  \sum_{y=1}^q X_{iy} + \sum_{x=1}^{i-1} \sum_{y=1}^q X_{xiy} \\+\> \sum_{x=i+1}^{m} \sum_{y=1}^q X_{ixy}\\
\text{However, from equation (\ref{dq2}) it can be seen that}\\
\sum_{j=1}^{q} (q-1){Y_{ij}^\prime} = Y_{i(q+1)}^\prime
\end{IEEEeqnarray*}
Hence $Y_{i(q+1)}^\prime$ for $1\leq i \leq m$ can be computed from $\{ Y_{e_{ij}} | 1\leq i\leq m, 1\leq j\leq q\}$ if the characteristic of the finite field divides $q$. This shows that 
\begin{equation}
H(A) = H(A,C) \label{dq9}
\end{equation}
Now, since equation (\ref{wq3}) is true irrespective of the characteristic of the finite field, we get:
\begin{IEEEeqnarray}{l}
H(A,B) \geq \sum_{i=1}^{m-1} \sum_{x=i+1}^{m} \sum_{y=1}^{q+1} H(X_{ixy}) + \sum_{i=1}^m \sum_{j=1}^{q+1} H(X_{ij})\IEEEeqnarraynumspace \label{dq7}
\end{IEEEeqnarray}
Now we have,
\begin{IEEEeqnarray*}{l}
\sum_{i=1}^m \sum_{j=1}^q H(Y_{e_{ij}}) + \sum_{i=1}^m H(Y_{e_{i(q+1)}}) - \sum_{i=1}^m H(Y_{i(q+1)}^\prime)\\
=\sum_{i=1}^m \sum_{j=1}^q H(Y_{e_{ij}}) + \sum_{i=1}^m H(Y_{e_{i(q+1)}},Y_{i(q+1)}) \\\hfill -\> \sum_{i=1}^m H(Y_{i(q+1)}^\prime) \, \text{ (from (\ref{dq2}))}\\
\geq H(A) + H(B,C) - H(C) \\
= H(A) + H(B|C)\\
\geq H(A) + H(B|C,A) \\
= H(A) + H(B|A) \qquad \text{ (from (\ref{dq9}))}\\
= H(A,B)\\
\geq \sum_{i=1}^{m-1} \sum_{x=i+1}^{m} \sum_{y=1}^{q+1} H(X_{ixy}) + \sum_{i=1}^m \sum_{j=1}^{q+1} H(X_{ij}) \hfill \text{ (from (\ref{dq7}))}
\end{IEEEeqnarray*}
Hence, we have,
\begin{IEEEeqnarray*}{l}
lm(q) + lm - rm \geq \frac{rm(m-1)(q+1)}{2} + rm(q+1)\\
lm(q+1) \geq \frac{rm(m-1)(q+1)}{2} + rm(q+1) + rm\\
or,\, l \geq \frac{r(m-1)}{2} + r + \frac{r}{q+1}\\
or,\, \frac{l}{r} \geq \frac{(m-1)}{2} + 1 + \frac{1}{q+1}\\
or,\, \frac{l}{r} \geq \frac{m+1}{2} + \frac{1}{q+1}\\
or,\, \frac{l}{r} \geq \frac{(m+1)(q+1) + 2}{2(q+1)}\\
or,\, \frac{r}{l} \leq \frac{2(q+1)}{(m+1)(q+1) + 2}\\
or,\, \frac{r}{l} \leq \frac{2}{m+1 + \frac{2}{q+1}}
\end{IEEEeqnarray*}
Since, $\frac{2}{q+1} > 0$, it follows that $\frac{2}{\frac{2}{q+1} + 1 + m} <\frac{2}{1 + m}$.
\end{IEEEproof}

\begin{theorem}
For any non-zero positive rational number $\frac{k}{n}$, and for any given set of primes $\{p_1,p_2,\ldots,p_w\}$,
there exist a sum-network which has capacity equal to $\frac{k}{n}$, and has a capacity achieving rate $\frac{k}{n}$ fractional linear network coding solution if and only if the characteristic of the finite field does not belong to the given set.
\end{theorem}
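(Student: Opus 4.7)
The plan is to mirror the construction and argument used to prove Theorem~\ref{thm3}, but now starting from the sum-network $\mathcal{N}_2$ of Fig.~\ref{nonsum} in place of $\mathcal{N}_1$. Specifically, I would fix $q = p_1 \times p_2 \times \cdots \times p_w$ and $m = 2n-1$, take $k$ disjoint copies $\mathcal{N}_{2,1},\mathcal{N}_{2,2},\ldots,\mathcal{N}_{2,k}$ of $\mathcal{N}_2$ (indexing the middle edge $e_{ij}$ in the $x$-th copy as $e_{ijx}$), and then merge all $k$ copies of each source and each terminal carrying the same label into a single node. Call the resulting sum-network $\mathcal{N}_2^\prime$; this is the candidate that will achieve rate $k/n$ exactly when the characteristic is not in $\{p_1,\ldots,p_w\}$.

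First I would upper-bound the rate. Running the argument of Claim~\ref{cla3} in each copy $\mathcal{N}_{2,x}$ separately (the decoding derivations use only edges and sources local to that copy), the analogue of the chain leading to (\ref{wq3}) gives that the combined edge information $\{Y_{e_{ijx}}\}$ across all $k$ copies determines all source messages $X_{ij}$ and $X_{ixy}$. Counting symbols then yields $lm(q+1)k \geq rm(q+1) + rm(m-1)(q+1)/2$, which rearranges to $r/l \leq 2k/(m+1) = k/n$. For achievability when the characteristic does not divide $q$, I would invoke Claim~\ref{cla3} on each copy: each $\mathcal{N}_{2,x}$ admits a $(2,m+1)$ fractional linear solution, so partitioning the $2k$ source symbols of each combined source into $k$ pairs and routing the $x$-th pair through $\mathcal{N}_{2,x}$ lets every combined terminal recover the full sum, giving a $(2k,m+1)$ solution, i.e.\ rate $k/n$.

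For the converse when the characteristic \emph{does} divide $q$, I would prove the natural analogue of Lemma~\ref{lem1}: if $\mathcal{N}_2^\prime$ has an $(r,l)$ fractional linear solution, then $\mathcal{N}_2$ has an $(r,lk)$ fractional linear solution, obtained by simulating the $k$ parallel copies of each edge of $\mathcal{N}_2$ by $k$ temporal uses of the single edge. The proof is identical in form to the proof of Lemma~\ref{lem1} because the only structural property used there is that each edge of the base network is replicated $k$ times in the collapsed network. If $\mathcal{N}_2^\prime$ achieved rate $k/n = 2k/(m+1)$ over a field whose characteristic divides $q$, this lemma would give $\mathcal{N}_2$ a rate $2k/(k(m+1)) = 2/(m+1)$ solution over the same field, contradicting Claim~\ref{cla4}. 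Hence $\mathcal{N}_2^\prime$ is the desired sum-network.

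The step requiring the most care is making sure the decodability derivation of Claim~\ref{cla3} carries over to the merged network $\mathcal{N}_2^\prime$ without the merging creating new shortcut paths that could invalidate the cut-counting bound. This is why the copies are taken \emph{disjoint} before merging only sources and terminals (never intermediate nodes or edges): each middle edge $e_{ijx}$ belongs to exactly one copy, and every derivation in Claim~\ref{cla3} uses only edges within a single copy, so the per-copy information-flow conclusions combine additively. The rest is routine bookkeeping: substituting $m = 2n-1$ gives $2k/(m+1) = k/n$, and the if-and-only-if follows from the fact that $p \mid q$ iff $p \in \{p_1,\ldots,p_w\}$.
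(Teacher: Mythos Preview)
Your proposal is correct and follows essentially the same approach as the paper: fix $q=p_1\cdots p_w$, $m=2n-1$, form $\mathcal{N}_2^\prime$ by merging sources and terminals across $k$ disjoint copies of $\mathcal{N}_2$, then use the per-copy version of Claim~\ref{cla3} for the capacity upper bound and achievability, and the analogue of Lemma~\ref{lem1} together with Claim~\ref{cla4} for the converse. Your added remark about why merging only sources and terminals preserves the cut structure is a helpful clarification but not a departure from the paper's argument.
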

\begin{IEEEproof}
The proof is similar to that of Theorem~\ref{thm3}. Consider the sum-network $\mathcal{N}_2$ for $q=p_1\times p_2\times \cdots p_w$ and $m = 2n-1$. The proof uses the fact that the characteristic of the finite field divides $q$ if only if it belongs to the set $\{p_1,p_2,\ldots,p_w\}$. So if the characteristic does not divide $q$, it does not belong to the set $\{p_1,p_2,\ldots,p_w\}$. Consider $k$ copies of the sum-network $\mathcal{N}_2$ and name the $i^{\text{th}}$ copy as $\mathcal{N}_{2i}$. Also name the edge $e_{xj}$ in the $i^{th}$ copy as $e_{xji}$ where $1\leq x\leq m$ and $1\leq j\leq q+1$. Like $\mathcal{N}_1^\prime$ was constructed by combining $k$ copies of $\mathcal{N}_1$, let us construct $\mathcal{N}_2^\prime$ by combining $k$ copies of $\mathcal{N}_2$. Note that such a combination only merges the corresponding sources and terminals, and the rest of the sum-network remains disjoint. We first show that the sum-network has capacity equal to $\frac{k}{n}$.

Proceeding similarly as to the proof of Claim~\ref{cla3}, instead of equation (\ref{wq3}), we will have,
\begin{IEEEeqnarray*}{l}
\{Y_{e_{ijx}} | 1\leq i\leq m, 1\leq j\leq q{+}1, 1\leq x\leq k\}\\
\rightarrow\\
\bigl\{\{ X_{ixl} | 1\leq i,x\leq m, i < x, 1\leq l\leq (q+1) \},\\
\{ X_{ij} | 1\leq i\leq m, 1\leq j\leq (q+1) \}\bigr\}
\end{IEEEeqnarray*}
From this equation, as in Claim~\ref{cla3} we will get, $\frac{r}{l} \leq \frac{2k}{m+1}$. We show that $\mathcal{N}_2^\prime$ has a $(2k,m+1)$ fractional linear network coding solution if the characteristic of the finite field does not divide $q$. Since each copy of $\mathcal{N}_2$ has a $(2,m+1)$ fractional linear network coding solution over such a finite field, and there are $k$ copies of $\mathcal{N}_2$ in $\mathcal{N}_2^\prime$, consider sending the $(2i)^{\text{th}}$ and $(2i-1)^{\text{th}}$ component of the source massages through the $i^{\text{th}}$ copy $\mathcal{N}_{2i}$. Then it can be seen that any terminal can compute the required sum.

We now show that if the characteristic of the finite field divides $q$, then the rate $(2k,m+1)$ cannot be achieved in the sum-network $\mathcal{N}_2^\prime$ by using fractional linear network coding. On the contrary, say that the said rate can be achieved. Now similar to Lemma~\ref{lem1}, it can be also shown that if $\mathcal{N}_2^\prime$ has an $(r,l)$ fractional linear network coding solution then the sum-network $\mathcal{N}_2$ has an $(r,kl)$ fractional linear network coding solution. This shows that the rate $\frac{2k}{k(m+1)} = \frac{2}{m+1}$ can be achieved in $\mathcal{N}_2$ even when the characteristic divides $q$. However, this is in contradiction with Claim~\ref{cla4}. Hence, $\mathcal{N}_2^\prime$ is the required sum-network.
\end{IEEEproof}

\section{Conclusion}\label{sec4}
It is known that every rational number is the capacity of some sum-network. It is also known that for any finite or co-finite set of primes there exists a sum-network which has a capacity achieving rate $1$ linear network coding solution if and only if the characteristic of the finite field belongs to the given set. In this paper we showed that for any finite or co-finite set of prime numbers, and for any non-zero positive rational number $k/n$, there exists a sum-network which has a capacity achieving rate $k/n$ fractional linear network coding solution if and only if the characteristic of the finite field belongs to the given set of primes.

\end{document}